\newtheorem{proposition}{Proposition}
\newtheorem{theorem}{Theorem}
\newtheorem{observation}{Observation}
\newcommand{\ket}[1]{\left\vert#1\right\rangle}
\newcommand{\bra}[1]{\left\langle#1\right\vert}
\def\bra#1{\langle #1|}
\def\ket#1{\left|#1 \right>}
\def\Tr{\mbox{Tr}}
\begin{document}
\title{Clock--Work Trade-Off Relation for Coherence in Quantum Thermodynamics}
\author{Hyukjoon Kwon}
\affiliation{Center for Macroscopic Quantum Control, Department of Physics and Astronomy, Seoul National University, Seoul, 151-742, Korea}
\author{Hyunseok Jeong}
\affiliation{Center for Macroscopic Quantum Control, Department of Physics and Astronomy, Seoul National University, Seoul, 151-742, Korea}
\author{David Jennings}
\affiliation{Department of Physics, University of Oxford, Oxford, OX1 3PU, United Kingdom}
\affiliation{QOLS, Blackett Laboratory, Imperial College London, London SW7 2AZ, United Kingdom}
\author{Benjamin Yadin}
\affiliation{QOLS, Blackett Laboratory, Imperial College London, London SW7 2AZ, United Kingdom}
\author{M. S. Kim}
\affiliation{QOLS, Blackett Laboratory, Imperial College London, London SW7 2AZ, United Kingdom}
\date{\today}
\begin{abstract}
In thermodynamics, quantum coherences---superpositions between energy eigenstates---behave in distinctly nonclassical ways. Here we describe how thermodynamic coherence splits into two kinds---``internal" coherence that admits an energetic value in terms of thermodynamic work, and ``external" coherence that does not have energetic value, but instead corresponds to the functioning of the system as a quantum clock. For the latter form of coherence we provide dynamical constraints that relate to quantum metrology and macroscopicity, while for the former, we show that quantum states exist that have finite internal coherence yet with zero deterministic work value. Finally, under minimal thermodynamic assumptions, we establish a clock-work trade-off relation between these two types of coherences. This can be viewed as a form of time-energy conjugate relation within quantum thermodynamics that bounds the total maximum of clock and work resources for a given system.
\end{abstract}
\pacs{}
\maketitle

Classical thermodynamics describes the physical behavior of macroscopic systems composed of large numbers of particles. Thanks to its intimate relationship with statistics and information theory, the domain of thermodynamics has recently been extended to include small systems, and even quantum systems. One particularly pressing question is how the existence of quantum coherences, or superpositions of energy eigenstates, impacts the laws of thermodynamics \cite{Aberg14, Skrzypczyk14, Uzdin15, Korzekwa16}, in addition to quantum correlations \cite{Park13, Reeb14, Huber15, Llobet15}.

We now have a range of results for quantum thermodynamics \cite{Janzing2000, Horodecki13, Brandao13, Brandao15,  Lostaglio15, LostaglioX, Cwiklinski15, Goold16, Bera16, Wilming16, LostaglioPRL, Muller17, Gour17} that have been developed within the resource-theoretic approach. A key advantage of the resource theory framework is that it avoids highly problematic concepts such as ``heat" or ``entropy" as its starting point. Its results have been shown to be consistent with traditional thermodynamics \cite{Weilmann16} while allowing for the inclusion of coherence as a resource \cite{Sterltsov17} in the quantum regime. Very recently, a framework for quantum thermodynamics with coherence was introduced in Ref.~\cite{Gour17}. Remarkably, the thermodynamic structure (namely, which states $\hat\sigma$ are thermodynamically accessible from a given state $\hat\rho$) turns out to be fully describable in terms of a single family of entropies. This framework of ``thermal processes'' is defined by the following three minimal physical assumptions: (i) that energy is conserved microscopically, (ii) that an equilibrium state exists, and (iii) that quantum coherence has a thermodynamic value. These are described in more detail in \cite{Supplemental}. Note that thermal processes contain thermal operations (TOs) \cite{Janzing2000,Horodecki13} as a subset, and coincide with TOs on incoherent states; however, in contrast to TOs they admit a straightforward description for the evolution of states with coherences between energy eigenspaces.

In this Letter we work under the same thermodynamic assumptions (i)--(iii) as above and show that quantum coherence in thermodynamics splits into two distinct types: \emph{internal} coherences between quantum states of the same energy, and \emph{external} coherences between states of different energies. This terminology is used because the external coherences in a system are only defined relative to an external phase reference frame, while internal coherences are defined within the system as relational coherences between its subcomponents.

We focus on the case of an $N$-partite system with noninteracting subsystems.
The Hamiltonian is written as $\hat{H} = \sum_{i=1}^N \hat{H}_i$ and we assume that each $i$th local Hamiltonian $\hat{H}_i$ has an energy spectrum $\{ E_i \}$ with local energy eigenstates $\ket{E_i}$.
Then a quantum state of this system may be represented as 
$$
\hat\rho =   \sum_{{\boldsymbol E}, {\boldsymbol E'}  }
 \rho_{{\boldsymbol E}{\boldsymbol E'}  } \ket{\boldsymbol E} \bra{\boldsymbol E'},
$$
where ${\boldsymbol E} := (E_1, E_2, \cdots, E_N)$ and $\ket{\boldsymbol E} := \ket{E_1 E_2 \cdots E_N}$.
We also define the total energy of the string  ${\boldsymbol E}$ as ${\cal E}_{\boldsymbol E} := \sum_{i=1}^N E_i$.
Classical thermodynamic properties are determined by the probability distribution of the local energies, including their correlations.
This information is contained in the diagonal terms of density matrix
$P(\boldsymbol E)  := {\rm Tr} \left[ \hat\Pi_{\boldsymbol E} \hat\rho \right] = \rho_{\boldsymbol E \boldsymbol E}$,
where $\hat\Pi_{\boldsymbol E} := \ket{\boldsymbol E} \bra{\boldsymbol E}$.
Corresponding classical states could have degeneracies in energy, but we still have a distinguished orthonormal basis set $\{ \ket{\mathbf{E}} \}$.
The probability distribution of the total energy is $\displaystyle p_{\cal E} := \sum_{\boldsymbol{E} \colon {\cal E}_{\boldsymbol E} = {\cal E}} P(\boldsymbol E)$. So every state has a corresponding classical state defined via the projection $\Pi(\hat\rho) := \sum_{\boldsymbol E} P(\boldsymbol E) \ket{\boldsymbol E} \bra{\boldsymbol E}$.

However, a quantum system is defined by more than its classical energy distribution -- it may have coherence in the energy eigenbasis. This coherence is associated with nonzero off-diagonal elements in the density matrix, namely
$\ket{\boldsymbol E} \bra{\boldsymbol E'}$ for $\boldsymbol E  \neq \boldsymbol E'$.
The internal coherence corresponds to off-diagonal terms of the same total energy (where $\cal{E}_{\boldsymbol E} = \cal{E}_{\boldsymbol E'}$) and external coherence corresponds to terms with different energies ($\cal{E}_{\boldsymbol E} \neq \cal{E}_{\boldsymbol E'}$). For any state $\hat\rho$, we denote the corresponding state in which all external coherence is removed by $\mathcal{D}(\hat\rho) := \sum_{\mathcal E} \hat\Pi_{\mathcal E} \hat\rho \hat\Pi_{\mathcal E}$, where $\hat\Pi_{\mathcal E} := \sum_{\boldsymbol{E} \colon \mathcal{E}_{\boldsymbol{E}} = \mathcal{E}} \hat\Pi_{\boldsymbol E}$ is the projector onto the eigenspace of total energy $\mathcal{E}$.

As illustrated in Fig.~\ref{FIG1}, internal coherence may be used to extract work; however, it has been shown that external coherences obey a superselection rule (called ``work locking") that forbids work extraction, and is unavoidable if one wishes to explicitly account for all sources of coherence in thermodynamics \cite{Lostaglio15}. We study this phenomenon by defining the process of extracting work \emph{purely from the coherence}, without affecting the classical energy statistics. We find the conditions under which work can be deterministically extracted in this way from a pure state. Next, we show that external coherence is responsible for a system's ability to act as a clock. The precision of the clock may be quantified by the quantum Fisher information (QFI) \cite{Braunstein94}; we show that the QFI satisfies a second-law-like condition, stating that it cannot increase under a thermal process. Finally, we derive a fundamental trade-off inequality between the QFI and the extractable work from coherence demonstrating how a system's potential for producing work is limited by its ability to act as a clock and vice versa.

{\it Extractable work from coherence.---}
Here we demonstrate that, in a single-shot setting, work may be extracted from coherence without changing the classical energy distribution $P(\boldsymbol E)$ of the system.
We consider the following type of work extraction process
$$
\hat\rho \otimes \ket{0} \bra{0}_B \xrightarrow{\rm thermal~process} \Pi(\hat\rho) \otimes \ket{W}\bra{W}_B,$$ in which the energy of a work qubit \cite{footnote} $B$ with Hamiltonian $\hat{H}_B = W \ket{W}\bra{W}_B$ ($W \geq 0$) is raised from $\ket{0}_B$ to $\ket{W}_B$.

For energy-block-diagonal states $\hat\rho = \mathcal{D}(\hat\rho)$ and $\hat\sigma = \mathcal{D}(\hat\sigma)$, the
work distance is defined as
$D_{\rm work} (\hat\rho \succ \hat\sigma)
= \displaystyle \inf_{\alpha} \left[ F_\alpha(\hat\rho) - F_\alpha(\hat\sigma) \right] $ \cite{Brandao15},
where $F_\alpha(\hat\rho) = k_B T S_\alpha(\hat\rho || \hat\gamma ) - k_B T \log Z$ is a generalized free energy based on the R\'enyi divergence
$$
S_\alpha({\hat{\rho}}||{\hat{\sigma}}) = 
\begin{cases}
\frac{1}{\alpha-1} \log \Tr [ {\hat{\rho}}^\alpha {\hat{\sigma}}^{1-\alpha} ], & \alpha \in [0,1)\\
\frac{1}{\alpha-1} \log \Tr \left[ \left({\hat{\sigma}}^\frac{1-\alpha}{2\alpha} {\hat{\rho}} {\hat{\sigma}}^\frac{1-\alpha}{2\alpha}\right)^\alpha \right], & \alpha>1.
\end{cases}
$$
Here $Z = {\rm Tr} e^{-\hat{H}/(k_B T)}$ is a partition function of the system.
The work distance is the maximum extractable work by a thermal process by taking $\hat\rho$ to $\hat\sigma$ \cite{Brandao15}.
\begin{figure}[t]
\includegraphics[width=0.8\linewidth]{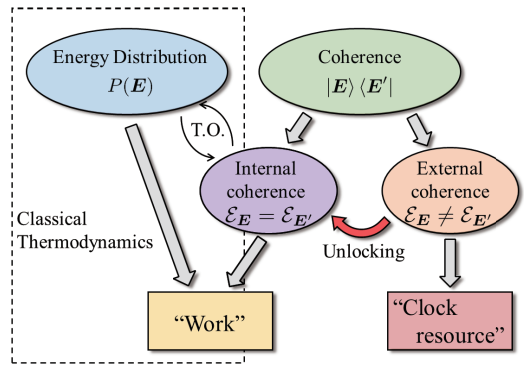}
\caption{Thermodynamic resources for many-body quantum systems. Coherences between energy levels provide coherent oscillations and are resources for the composite system to act as a quantum clock. At the other extreme, projective energy measurements on the individual systems provide the classical energy statistics, which may display classical correlations.  Intermediate between these two cases are quantum coherences that are internal to energy eigenspaces.  Partial interconversions are possible between these three aspects.}
\label{FIG1}
\end{figure}

\begin{figure}[t]
\includegraphics[width=\linewidth]{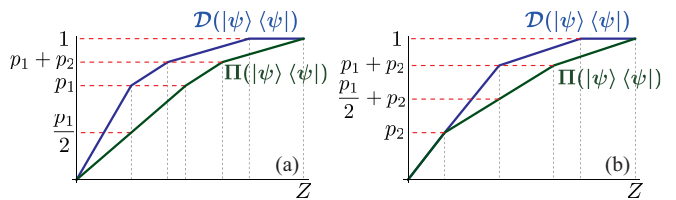}
\caption{Thermomajorization graph for the energy-block-diagonal state ${\cal D}(\ket{\psi}\bra{\psi})$ and its projection to an incoherent state ${\Pi}(\ket{\psi}\bra{\psi})$ for a two-qubit state $\ket{\psi}$ studied in the text with coefficients $p_1$ and $p_2$. $Z$ represents the partition function of the system.
(a) When $p_1 e^{\beta \omega_0}$ is the maximum among $p_i e^{\beta E_i}$, $W_{\rm coh}$ is positive, but (b) if another energy (e.g., $p_2 e^{2\beta\omega_0}$ in the plot) obtains the maximum, $W_{\rm coh}=0$.
}
\label{WcohQ}
\end{figure}

Even when the initial state $\hat\rho$ is not block diagonal in the energy basis, the extractable work is still given by $D_{\rm work}(\mathcal{D}(\hat\rho) \succ \hat\sigma)$, so external coherence cannot be used to extract additional work \cite{Lostaglio15}. In order to exploit external coherence for work, one needs multiple copies of $\hat\rho$ \cite{Brandao13, Korzekwa16} or ancilliary coherent resources \cite{Brandao13, Aberg14, Lostaglio15}. Thus, the single-shot extractable work purely from coherence is given by
\begin{equation}
\label{Wcoh}
W_{\rm coh} = \inf_\alpha \left[ F_\alpha({\cal D} (\hat\rho)) - F_\alpha(\Pi(\hat\rho)) \right].
\end{equation}

For example, consider extracting work from coherence in the pure two-qubit state
$$
\ket{\psi} = \sqrt{p_0} \ket{00} +  \sqrt{p_1} \left( \frac{ \ket{01} + \ket{10}}{ \sqrt{2} } \right)+ \sqrt{p_2} \ket{11},
$$
where each qubit has local Hamiltonain $H_i = \omega_0 \ket{1}\bra{1}$. As shown in Fig.~\ref{WcohQ} using the concept of thermomajorization \cite{Horodecki13}, we have $W_{\rm coh} >0$ only for sufficiently large $p_1$. In this case, we have the necessary condition $p_1 > (1+e^{\beta\omega_0} + e^{-\beta\omega_0})^{-1}$
and the sufficient condition $p_1 > e^{\beta\omega_0}/({1+e^{\beta\omega_0}})$ for $W_{\rm coh} > 0$, independent of $p_0$ and $p_2$ at an inverse temperature $\beta = (k_B T)^{-1}$. See  Ref.~\cite{Supplemental} for details.

We generalize this statement to the internal coherence of an arbitrary pure state. Since external coherences cannot contribute to work, we need only consider the properties of the state dephased in the energy eigenbasis.
\begin{observation}
\label{PureWext}
Consider any pure state state $\ket{\psi}$ with $\mathcal{D}( \ket{\psi}\bra{\psi}) =  \sum_k p_\mathcal{E}|\psi_\mathcal{E}\rangle\langle \psi_\mathcal{E}|$, where $|\psi_\mathcal{E}\rangle$ is an energy $\mathcal{E}$ eigenstate. Nonzero work can be extracted deterministically from the internal coherence of $\hat\rho$ at an inverse temperature $\beta$ if and only if $\Pi\left(\ket{\psi_{{\cal E}^*}}\bra{\psi_{{\cal E}^*}}\right) \neq \ket{\psi_{{\cal E}^*}}\bra{\psi_{{\cal E}^*}}$ for ${\cal E}^* = \underset{\cal E}{\rm argmax}~p_{\cal E}e^{\beta {\cal E}}$.
\end{observation}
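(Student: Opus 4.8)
The plan is to evaluate $W_{\rm coh}$ in Eq.~\eqref{Wcoh} explicitly on a pure state and then reduce the claim to a sign analysis of the single-parameter family $g(\alpha):=F_\alpha(\mathcal{D}(\hat\rho))-F_\alpha(\Pi(\hat\rho))$. Because $\Pi=\Pi\circ\mathcal{D}$, the map $\Pi$ acts on block-diagonal states as a Gibbs-preserving dephasing, so data processing for the $S_\alpha$ gives $g(\alpha)\geq 0$ for every $\alpha$; moreover $g$ extends continuously to $\alpha\in\{0,1,\infty\}$, so the infimum in Eq.~\eqref{Wcoh} is attained and $W_{\rm coh}=\min_{\alpha\in[0,\infty]}g(\alpha)$. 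Hence $W_{\rm coh}>0$ if and only if $g(\alpha)>0$ for all admissible $\alpha$, and the whole question becomes: for which $\alpha$ can $g(\alpha)$ vanish?

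First I would write down the formulas. Since the Gibbs state restricted to a total-energy eigenspace is proportional to the projector onto that space, $\mathcal{D}(\hat\rho)$ commutes with $\hat\gamma$, and because $\mathcal{D}(\ket{\psi}\bra{\psi})=\sum_{\mathcal E}p_{\mathcal E}\ket{\psi_{\mathcal E}}\bra{\psi_{\mathcal E}}$ is pure on each block, one gets $F_\alpha(\mathcal{D}(\hat\rho))=\frac{k_{B}T}{\alpha-1}\log\sum_{\mathcal E}p_{\mathcal E}^{\,\alpha}e^{\beta(\alpha-1)\mathcal E}$ for $\alpha\neq 1$. Writing $c^{(\mathcal E)}_{\boldsymbol E}:=|\langle\boldsymbol E|\psi_{\mathcal E}\rangle|^{2}$, so that $P(\boldsymbol E)=p_{\mathcal E}c^{(\mathcal E)}_{\boldsymbol E}$, the same computation gives $F_\alpha(\Pi(\hat\rho))=\frac{k_{B}T}{\alpha-1}\log\sum_{\mathcal E}p_{\mathcal E}^{\,\alpha}e^{\beta(\alpha-1)\mathcal E}\,\big(\sum_{\boldsymbol E}(c^{(\mathcal E)}_{\boldsymbol E})^{\alpha}\big)$. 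The two differ only through the weight $\sum_{\boldsymbol E}(c^{(\mathcal E)}_{\boldsymbol E})^{\alpha}$ attached to each block, and since $\{c^{(\mathcal E)}_{\boldsymbol E}\}_{\boldsymbol E}$ is a probability distribution this weight is $\leq 1$ for $\alpha>1$, $\geq 1$ for $0<\alpha<1$, and equals $1$ exactly when $\psi_{\mathcal E}$ is a single local energy eigenstate, i.e. when $\Pi(\ket{\psi_{\mathcal E}}\bra{\psi_{\mathcal E}})=\ket{\psi_{\mathcal E}}\bra{\psi_{\mathcal E}}$.

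From here I would treat the two regimes separately. For finite $\alpha$ (including $\alpha\to 1$, where $g(1)=k_{B}T\sum_{\mathcal E}p_{\mathcal E}H(c^{(\mathcal E)})$ is the conditional entropy $H(\boldsymbol E\mid\mathcal E)$), the equality condition above shows $g(\alpha)>0$ as soon as any block with $p_{\mathcal E}>0$ carries internal coherence; in particular $g(\alpha)>0$ at all finite $\alpha$ whenever $\psi_{\mathcal E^*}$ does. In the limit $\alpha\to\infty$ the log-sums collapse onto their largest terms, so $F_\infty(\mathcal{D}(\hat\rho))=k_{B}T\log\max_{\mathcal E}(p_{\mathcal E}e^{\beta\mathcal E})$ while $F_\infty(\Pi(\hat\rho))=k_{B}T\log\max_{\boldsymbol E}(P(\boldsymbol E)e^{\beta\mathcal E_{\boldsymbol E}})$; the chain $P(\boldsymbol E)e^{\beta\mathcal E_{\boldsymbol E}}=p_{\mathcal E_{\boldsymbol E}}e^{\beta\mathcal E_{\boldsymbol E}}\,c^{(\mathcal E_{\boldsymbol E})}_{\boldsymbol E}\leq p_{\mathcal E_{\boldsymbol E}}e^{\beta\mathcal E_{\boldsymbol E}}\leq p_{\mathcal E^*}e^{\beta\mathcal E^*}$ then shows that these two maxima coincide---that is, $g(\infty)=0$---precisely when the string maximizing $P(\boldsymbol E)e^{\beta\mathcal E_{\boldsymbol E}}$ lies in the block $\mathcal E^*$ with $c^{(\mathcal E^*)}$ a point mass, i.e. with $\psi_{\mathcal E^*}$ incoherent. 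Putting the pieces together: if $\psi_{\mathcal E^*}$ is incoherent then $g(\infty)=0$ and so $W_{\rm coh}=0$; if $\psi_{\mathcal E^*}$ has internal coherence then $g(\infty)>0$ and, by the finite-$\alpha$ step, $g(\alpha)>0$ for all finite $\alpha$ as well, whence $W_{\rm coh}=\min_\alpha g(\alpha)>0$. This is the stated equivalence.

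The step I expect to require the most care is the $\alpha\to\infty$ analysis when $p_{\mathcal E}e^{\beta\mathcal E}$ is maximized at several values of $\mathcal E$: then $g(\infty)=0$ as soon as at least one maximizing block carries an incoherent $\psi_{\mathcal E}$, so the criterion is cleanest when $\mathcal E^*$ is read as ``any maximizer'' and internal coherence is demanded of every maximizing block (the degeneracy is absent for generic $p_{\mathcal E}$, and in particular for the two-qubit family above). The remaining ingredients are routine: the explicit block decomposition of the $S_\alpha$, the standard continuous extension of $F_\alpha$ to $\alpha\in\{0,1,\infty\}$, and the fact, already used above, that $\Pi$ is a thermal operation so that all the $S_\alpha$ contract under it.
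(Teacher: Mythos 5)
Your proposal is correct and follows essentially the same route as the paper's proof: an explicit block-wise evaluation of $F_\alpha(\mathcal{D}(\hat\rho))-F_\alpha(\Pi(\hat\rho))$ in which each block contributes the weight $\sum_{\boldsymbol E}(c^{(\mathcal E)}_{\boldsymbol E})^{\alpha}=e^{(1-\alpha)S_\alpha(\Pi(\ket{\psi_{\mathcal E}}\bra{\psi_{\mathcal E}}))}$, strict positivity at all finite $\alpha$ whenever a populated block carries internal coherence, and the $\alpha\to\infty$ limit isolating the block $\mathcal E^*$ to get the ``only if'' direction. Your extra touches (the data-processing remark, the continuity/compactness justification that the infimum is attained, and the caveat about a degenerate maximizer of $p_{\mathcal E}e^{\beta\mathcal E}$) are consistent refinements of the paper's argument rather than a different method.
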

\noindent Internal coherence has some overlap with nonclassical correlations, namely quantum discord \cite{Modi12}. Consider the following quantity which quantifies the sharing of free energy between subsystems:
$$
 C_\alpha(\hat\rho_{1:2:\cdots:N}) := \beta \left[ F_\alpha(\hat\rho) - \sum_{i=1}^N F_\alpha(\hat\rho_i) \right],
$$
where $\hat \rho_i$ is the local state of the $i$th subsystem. For nondegenerate local Hamiltonians, the extractable work from coherence can be written as
$$
W_{\rm coh} = k_B T \inf_{\alpha} \left[ C_\alpha( {\cal D}(\hat\rho)) - C_\alpha( \Pi(\hat\rho)) \right],
$$
noting that the local free energies are the same for ${\cal D}(\hat\rho)$ and $\Pi(\hat\rho)$. This is of the same form as discord defined by Ollivier and Zurek \cite{Zurek01}, expressed as a difference between total and classical correlations. Note that the classical correlations are defined here with respect to the energy basis, instead of the usual maximization over all local basis choices. This free-energy correlation is also related to  ``measurement-induced disturbance'' \cite{Luo08} by considering $\Pi(\hat\rho)$ as a classical measurement with respect to the local energy bases.

Unlike previous related studies \cite{Llobet15,Korzekwa16}, our result requires that only coherence is consumed in the work extraction processes, leaving all energy statistics unchanged. We may also consider the ``incoherent" contribution to the extractable work, $W_{\rm incoh} := \inf_\alpha [F_\alpha(\Pi(\hat\rho)) - F_\alpha(\hat\gamma)] = F_{0}(\Pi(\hat\rho)) + k_B T \log Z$, which is the achievable work from an incoherent state $\Pi(\hat\rho)$ ending with a Gibbs state $\hat\gamma$. The sum of the coherent and incoherent terms cannot exceed the total extractable work from $\hat\rho$ to $\hat\gamma$, i.e., $W_{\rm coh} + W_{\rm incoh} \leq W_{\rm tot} = D_{\rm work}({\cal D}(\hat\rho) \succ \hat\gamma) $. The equality holds when $W_{\rm coh}$ in Eq.~(\ref{Wcoh}) is given at $\alpha = 0$.
We also point out that this type of work extraction process operates without any measurement or information storage as in Maxwell's demon \cite{Lloyd97, Zurek03} or the Szilard engine \cite{Park13} in the quantum regime.

Apart from the above example, a significant case is the so-called coherent Gibbs state \cite{Lostaglio15}, defined for a single subsystem as $\ket{\gamma} := \sum_i \sqrt{\frac{e^{-\beta E_i}}{Z}} \ket{E_i}$. No work can be extracted from this state, as $\mathcal{D}(\ket{\gamma}\bra{\gamma}) = \hat\gamma$ -- an instance of work locking. However, nonzero work can be unlocked \cite{Lostaglio15} from multiple copies $\ket{\gamma}^{\otimes N},\, N>1$. In fact, from Observation \ref{PureWext}, we see that $N=2$ is always sufficient to give $W_{\rm coh}>0$. This is because $p_{\mathcal{E}} e^{\beta \mathcal{E}}$ is proportional to the degeneracy of the $\mathcal{E}$ subspace -- there always exists a degenerate subspace for $N \geq 2$, and this is guaranteed to have coherence.

{\it Coherence as a clock resource.---}
Having discussed the thermodynamical relevance of internal coherence, we now turn to external coherence. Suppose we have an initial state $\hat\rho_0 = \sum_{{\boldsymbol E}, {\boldsymbol E'}} \rho_{{\boldsymbol E}{\boldsymbol E'}  } \ket{\boldsymbol E} \bra{\boldsymbol E'}$. After free unitary evolution for time $t$, this becomes  $\hat\rho_t =   \sum_{{\boldsymbol E}, {\boldsymbol E'}  } \rho_{{\boldsymbol E}{\boldsymbol E'}  } e^{- i \Delta \omega_{\boldsymbol E \boldsymbol E'}  t} \ket{\boldsymbol E} \bra{\boldsymbol E'}$, where each off-diagonal component $\ket{\boldsymbol E} \bra{\boldsymbol E'}$ rotates at frequency $\Delta \omega_{\boldsymbol E \boldsymbol E'} = ({\cal E}_{\boldsymbol E} - {\cal E}_{\boldsymbol E'})/\hbar$. Internal coherences do not evolve ($\Delta \omega_{\boldsymbol E \boldsymbol E'} = 0$), while external coherences with larger energy gaps, and hence higher frequencies, can be considered as providing more sensitive quantum clocks \cite{Chen10,Komar14}. 

By comparing $\hat\rho_0$ with $\hat\rho_t$, one can estimate the elapsed time $t$. More precisely, the resolution of a quantum clock can be quantified by $(\Delta t)^2 = \langle (\hat{t} - t)^2 \rangle$, where $\hat{t}$ is the time estimator derived from some measurement on $\hat\rho_t$.
The resolution is limited by the quantum Cram\'er-Rao bound \cite{Braunstein94},
$(\Delta t)^2 \geq 1 / I_F(\hat{\rho}, {\hat{H})}$,
where $I_F(\hat{\rho}, \hat{H}) = 2 \sum_{i,j} \frac{(\lambda_i - \lambda_j)^2}{\lambda_i + \lambda_j} |\bra{i}  \hat{H} \ket{j}|^2$ is the QFI, and $\lambda_i,\,  \ket{i}$ are the eigenvalues and eigenstates of $\hat\rho$, respectively.
For the optimal time estimator $\hat{t}$ saturating the bound, the larger the QFI, the higher the clock resolution.
The maximum value of QFI for a given Hamiltonian $\hat{H}$ is $(E_{\rm max} - E_{\rm min})^2$, which can be obtained by the equal superposition $\ket{E_{\rm min}} + \ket{E_{\rm max}}$ between the maximum ($E_{\rm max}$) and minimum ($E_{\rm min}$) energy eigenstates. The Greenberger-Horne-Zeilinger (GHZ) state in an $N$-particle two-level system is a state of this form.

Another family of relevant measures of the clock resolution is the skew information $I_\alpha({\hat{\rho}},  \hat{H}) = \Tr({\hat{\rho}}  \hat{H}) - \Tr({\hat{\rho}}^\alpha  \hat{H} {\hat{\rho}}^{1-\alpha}  \hat{H})$ 
for $0 \leq \alpha \leq 1$ \cite{Wigner63,Braunstein94}. For pure states, both the QFI and skew information reduce to the variance: $\frac{1}{4} I_F(\ket{\psi},\hat{H}) = I_\alpha(\ket{\psi},\hat{H}) = V(\ket{\psi},\hat{H}) := \bra{\psi}{\hat{H}^2}\ket{\psi} - \bra{\psi}\hat{H}\ket{\psi}^2$.
In particular, the skew information of $\alpha=1/2$ has been studied in the context of quantifying coherence \cite{Luo17} and quantum macroscopicity \cite{Kwon18}. 
We also remark that a similar approach to ``time references" in quantum thermodynamics has been recently suggested using an entropic clock performance quantifier \cite{Gour17}.

We first note that, even though a quantum state might be very poor at providing work, it can still function as a good time reference.
The coherent Gibbs state is a canonical example. As mentioned earlier, no work may be extracted from $\ket{\gamma}$; however, such a state does allow time measurements, since $I_F(\ket{\gamma}, \hat{H}) =  4 \frac{\partial^2}{\partial \beta^2} \log Z$, which is proportional to the heat capacity $k_B \beta^2 \frac{\partial^2}{\partial \beta^2} \log Z$ \cite{Crooks12}.

Furthermore, the QFI and skew information are
based on monotone metrics \cite{Hansen08, Petz11},
and monotonically decrease under time-translation-covariant operations \cite{Marvian14,Yadin16}. 
It follows that the resolution of a quantum clock gives an additional constraint of a second-law type.

\begin{observation}
Under a thermal process, the quantum Fisher (skew) information $I_{F(\alpha)}$ of a quantum system cannot increase, i.e.
\begin{equation}
\label{Imonotone}
\Delta I_{F(\alpha)} \leq 0.
\end{equation}
\end{observation}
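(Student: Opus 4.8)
The plan is to recognize thermal processes as a subclass of time-translation-covariant operations and then invoke the contractivity of the monotone metrics underlying $I_F$ and $I_\alpha$, as established in \cite{Marvian14,Yadin16}. First I would recall the operational structure forced by assumptions (i)--(iii): any thermal process $\mathcal{T}$ can be dilated into (a) appending an equilibrium ancilla $\hat\gamma_B$ with Hamiltonian $\hat H_B$, (b) applying a global unitary $\hat U$ that conserves total energy microscopically, $[\hat U, \hat H_S + \hat H_B]=0$, and (c) discarding a subsystem. Equivalently, $\mathcal{T}$ is a Gibbs-preserving, time-translation-covariant channel, $\mathcal{T}(e^{-i\hat H t}\hat\rho\,e^{i\hat H t}) = e^{-i\hat H t}\mathcal{T}(\hat\rho)e^{i\hat H t}$ for all $t$.

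Next I would assemble $\Delta I_{F(\alpha)}\le 0$ from three elementary facts about the QFI, which hold verbatim for the skew information $I_\alpha$ since both arise from monotone metrics \cite{Hansen08, Petz11}. \emph{(a) Additivity under tensoring with the ancilla}: $I_F(\hat\rho_S \otimes \hat\gamma_B,\, \hat H_S + \hat H_B) = I_F(\hat\rho_S, \hat H_S) + I_F(\hat\gamma_B, \hat H_B) = I_F(\hat\rho_S, \hat H_S)$, because $\hat\gamma_B$ commutes with $\hat H_B$ and so has zero coherence, $I_F(\hat\gamma_B,\hat H_B)=0$. \emph{(b) Invariance under energy-conserving unitaries}: from $I_F(\hat U\hat\rho\hat U^\dagger, \hat H) = I_F(\hat\rho, \hat U^\dagger \hat H \hat U)$ and $[\hat U,\hat H]=0$ one gets $I_F(\hat U\hat\rho\hat U^\dagger, \hat H) = I_F(\hat\rho, \hat H)$. \emph{(c) Contractivity under the covariant partial trace}: $\Tr_B$ is a covariant CPTP map intertwining the time translations generated by $\hat H_S+\hat H_B$ and $\hat H_S$, so $I_F(\Tr_B[\,\cdot\,], \hat H_S) \le I_F(\,\cdot\,, \hat H_S + \hat H_B)$. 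Chaining (a)--(c) along the dilation of $\mathcal{T}$ yields $I_F(\mathcal{T}(\hat\rho), \hat H_S) \le I_F(\hat\rho, \hat H_S)$, and the identical chain with $I_\alpha$ gives $\Delta I_\alpha \le 0$. If the process also drives a work-storage system with a changing Hamiltonian, one simply replaces $\hat H_S$ by the appropriate total Hamiltonians on either side and uses covariance with respect to those.

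The step I expect to be the main obstacle is (c) combined with keeping the covariance bookkeeping airtight: the general contractivity theorem bounds $I_F$ across input and output only when the channel genuinely intertwines the two time-translation actions, so each elementary piece of the dilation---in particular the partial trace, and any relabeling attached to the work qubit---must be verified to be covariant with respect to the Hamiltonians actually used on each side. A secondary subtlety is whether the appended ancilla is exactly an equilibrium (Gibbs) state at the ambient temperature, which is what makes its QFI contribution vanish and lets additivity be invoked cleanly in (a); for the free (closed-system) version of the statement this issue does not arise and covariance is immediate. Once covariance at the level of the dilation is settled, the inequality $\Delta I_{F(\alpha)}\le 0$ follows directly from the cited monotonicity of these quantities under time-translation-covariant operations.
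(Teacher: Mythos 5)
Your core argument is exactly the paper's proof: thermal processes are time-translation covariant (this is the content of assumption (iii), and the paper notes TPs coincide with covariant Gibbs-preserving maps), and the QFI and skew information, being built on monotone metrics, are nonincreasing under covariant channels \cite{Marvian14,Yadin16}; that one-line chain is all the paper uses, so your proposal is correct and essentially identical in approach. One caveat on your extra dilation scaffolding: the dilation you write down---appending a \emph{Gibbs} ancilla $\hat\gamma_B$, an energy-conserving unitary, and a partial trace---is the definition of thermal \emph{operations}, and the paper is explicit that thermal processes form a strictly larger class not admitting this dilation in general; the TP dilation (assumptions (i) and (iii)) only guarantees an energy-conserving isometry acting on $\hat\rho_A\otimes\mathcal{D}(\hat\sigma_B)$ with an \emph{incoherent}, not necessarily thermal, ancilla. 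This does not damage your argument, since your step (a) only needs $[\hat\sigma_B,\hat H_B]=0$ so that $I_{F(\alpha)}(\hat\sigma_B,\hat H_B)=0$ (the ``secondary subtlety'' you worry about is therefore a non-issue), and in any case the direct covariance-plus-contractivity route you also invoke already closes the proof without any dilation bookkeeping.
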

We highlight that this condition is independent from those obtained previously, based on a family of entropy asymmetry measures
$ A_\alpha({\hat{\rho}}_S) = S_\alpha ({\hat{\rho}}_S || {\cal D} ({\hat{\rho}}_S))$ \cite{Lostaglio15}
and modes of asymmetry $\sum_{{\cal E}_{\boldsymbol E} - {\cal E}_{\boldsymbol E'} = \omega} |\rho_{\boldsymbol E \boldsymbol E'}| $  \cite{LostaglioX, MarvianPRA}. In Ref.~\cite{Supplemental}, we give an example of a state transformation that is forbidden by Eq.~(\ref{Imonotone}) but not by previous constraints.

Importantly, condition Eq.~(\ref{Imonotone}) remains significant in the many-copy or independent and identically distributed limit. This follows from the additivity of the QFI and skew information, namely, $ I_{F(\alpha)} ({\hat{\rho}}^{\otimes N}, \hat{H})/ N = I_{F(\alpha)}({\hat{\rho}},\hat{H}_1)$ for all $N$, where $\hat{H}=\sum_{i=1}^N \hat{H}_i$. In contrast, the measure $A_\alpha$ is negligible in this limit: $\displaystyle \lim_{N\rightarrow \infty} A_\alpha({\hat{\rho}}^{\otimes N}) / N = 0$ for all $\alpha$  \cite{Lostaglio15}.
The free energy $F_\alpha (\hat\rho)$ with $\alpha=1$ has been stated to be the unique monotone for asymptotic transformations \cite{Brandao13}. However, this is true \emph{only} if one is allowed to use a catalyst containing external coherence between every energy level, $\ket{M} = |M|^{-1/2} \sum_{m \in M} \ket{m}$, where $M= \{ 0,  \dots , 2N^{2/3} \}$, which contains a superlinear amount of clock resources $I_F= O(N^{4/3})$, so $I_F/N$ is unbounded. Thus, Eq.~(\ref{Imonotone}) is the first known nontrivial coherence constraint on asymptotic transformations under thermal processes without additional catalytic coherence resources.

We can illustrate the physical implications of this condition in an $N$-particle two-level system with a local Hamiltonian $\hat{H}_i = 0 \ket{0}_i\bra{0} + \omega_0 \ket{1}_i\bra{1}$ for every $i$th particle.
As noted above, for a product state ${\hat{\rho}}^{\otimes N}$, the QFI and skew information scale linearly with $N$.
On the other hand, the GHZ state $\ket{\psi_{\rm GHZ}} = 2^{-1/2}(\ket{0}^{\otimes N} + \ket{1}^{\otimes N})$ has quadratic scaling, $I_{F(\alpha)} (\ket{\psi_{\rm GHZ}}, \hat{H}) = O(N^2)$.
Thus the restriction given by Eq.~(\ref{Imonotone}) indicates that a thermal process cannot transform a product state into a GHZ state.
More generally, it is known that $I_F(\hat\rho, \hat{H}) \leq kN$ for $k$-producible states  in $N$-qubit systems \cite{Toth12,Hyllus12}, so
genuine multipartite entanglement is necessary to achieve a high clock precision of $I_F = O(N^2)$. Also note that the QFI has been used to quantify ``macroscopicity", the degree to which a state displays quantum behavior on a large scale \cite{Frowis2012,Yadin16}.

{\it Trade-off between work and clock resources.---} 
Having examined the two types of thermodynamic coherence independently, it is natural to ask if there is a relation between them.
Here, we demonstrate that there is always a trade-off between work and clock coherence resources.
We first give the following bound in an $N$-particle two-level system:
\begin{theorem} [Clock/work trade-off for two-level subsystems] 
For a system composed of N two-level particles with energy level difference $\omega_0$, the coherent work and clock resources satisfy 
\begin{equation}
\label{TradeOff}
W_{\rm coh} \leq N k_B T (\log 2) H_b\left( \frac{1}{2} \left[1-\sqrt{\frac{I_F(\hat\rho, \hat{H})}{N^2 \omega_0^2}} \right] \right) ,
\end{equation}
where $\hat{H}$ is the total Hamiltonian and $H_b(r) = -r \log_2 r -(1-r)\log_2(1-r) $ is the binary entropy. 
\label{TradeOffTheorem}
\end{theorem}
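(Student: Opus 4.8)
The plan is to bound the infimum in Eq.~(\ref{Wcoh}) from above by its value at $\alpha=1$, reducing the claim first to a gap of von Neumann entropies and then to a one‑variable inequality about the total‑energy distribution. (A naive argument yields only $W_{\rm coh}\le Nk_BT\log 2\,H_b(\langle k\rangle/N)$, which is far too weak -- it does not vanish for GHZ‑type states -- so the quantum Fisher information must be made to enter through the \emph{variance} of the total energy.) Since $F_1(\hat\eta)=\Tr[\hat\eta\hat H]-k_BTS(\hat\eta)$, and since $\mathcal{D}(\hat\rho)$ and $\Pi(\hat\rho)$ share the same total‑energy distribution and hence the same mean energy, evaluating at $\alpha=1$ gives
$$
W_{\rm coh}\le F_1(\mathcal{D}(\hat\rho))-F_1(\Pi(\hat\rho))=k_BT\,[S(\Pi(\hat\rho))-S(\mathcal{D}(\hat\rho))].
$$
Both states are block‑diagonal over the total‑energy eigenspaces; for $N$ two‑level particles the energy‑$k\omega_0$ block is spanned by the weight‑$k$ bit strings and has dimension $\binom{N}{k}$. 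Writing $p_k$ for the weight of that block, the entropy gap is $\sum_k p_k\,\Delta S_k$, where $\Delta S_k$ is the entropy increase when the normalised block state is dephased in the product energy basis; since $0\le\Delta S_k\le\log\binom{N}{k}\le N\log 2\,H_b(k/N)$, this is at most $N\log 2\sum_k p_k H_b(k/N)$.

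The core of the argument will be the classical inequality $\sum_k p_k H_b(k/N)\le H_b\!\big(\tfrac12-\tfrac1N\sqrt{\mathrm{Var}(k)}\big)$, where $\mathrm{Var}(k)$ is the variance under $\{p_k\}$. To prove it I would set $X=k/N\in[0,1]$ and write $H_b(X)=\psi\!\big((X-\tfrac12)^2\big)$ with $\psi(u):=H_b(\tfrac12+\sqrt u)$ on $[0,\tfrac14]$, well defined because $H_b(x)=H_b(1-x)$. The key analytic fact is that $\psi$ is decreasing and concave: after two differentiations, concavity reduces to the elementary inequality $s/(1-s^2)\ge\operatorname{artanh}s$ on $[0,1)$, which holds because both sides vanish at $0$ and the left side has the larger derivative. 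Then Jensen gives $\langle H_b(X)\rangle=\langle\psi((X-\tfrac12)^2)\rangle\le\psi(\langle(X-\tfrac12)^2\rangle)$, and since $\langle(X-\tfrac12)^2\rangle\ge\mathrm{Var}(X)$ while $\psi$ is decreasing, this is $\le\psi(\mathrm{Var}(X))=H_b(\tfrac12+\sqrt{\mathrm{Var}(X)})=H_b(\tfrac12-\sqrt{\mathrm{Var}(X)})$.

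To finish I would insert the QFI. Because $\hat H$ is diagonal in the product energy basis, $\Tr[\hat\rho\hat H^2]-(\Tr[\hat\rho\hat H])^2=\omega_0^2\,\mathrm{Var}(k)$, and the universal bound $I_F(\hat\rho,\hat H)\le4(\Tr[\hat\rho\hat H^2]-(\Tr[\hat\rho\hat H])^2)$ -- which follows from the displayed formula for $I_F$ upon replacing $\hat H$ by $\hat H-\Tr[\hat\rho\hat H]$ and using $(\lambda_i-\lambda_j)^2\le(\lambda_i+\lambda_j)^2$ -- gives $\tfrac1N\sqrt{\mathrm{Var}(k)}\ge\tfrac12\sqrt{I_F(\hat\rho,\hat H)/(N^2\omega_0^2)}$. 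Since $0\le I_F\le(E_{\rm max}-E_{\rm min})^2=N^2\omega_0^2$ and $\mathrm{Var}(k)\le N^2/4$, both $\tfrac12-\tfrac1N\sqrt{\mathrm{Var}(k)}$ and $\tfrac12[1-\sqrt{I_F/(N^2\omega_0^2)}]$ lie in $[0,\tfrac12]$, where $H_b$ is increasing; chaining the three estimates then yields Eq.~(\ref{TradeOff}).

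The hard part is the classical inequality; everything else is bookkeeping. The one subtlety there is the choice of change of variables $u=(x-\tfrac12)^2$ -- a direct application of Jensen to $H_b$ (concave in $x$) or to $|x-\tfrac12|$ is not sharp enough to recover the variance with the correct constant. I would also check the extreme cases explicitly: $\mathrm{Var}(k)=0$, where the bound reads $W_{\rm coh}\le Nk_BT\log 2$ (the maximal internal‑coherence contribution), and $I_F=N^2\omega_0^2$, attained by the GHZ state, where it reads $W_{\rm coh}\le0$, consistent with $\mathcal{D}$ of a GHZ state being incoherent.
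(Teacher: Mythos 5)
Your proof is correct and takes essentially the same route as the paper's: bound the infimum at $\alpha=1$ to get $W_{\rm coh}\le k_BT[S(\Pi(\hat\rho))-S(\mathcal{D}(\hat\rho))]$, bound the entropy gap blockwise by $\sum_k p_k\log\binom{N}{k}\le N\log 2\sum_k p_k H_b(k/N)$, control this average by the variance of the total energy, and finish with $I_F(\hat\rho,\hat H)\le 4\,{\rm Var}_{\hat H}$ together with the symmetry and monotonicity of $H_b$. The only difference is the middle classical step: the paper expands $H_b$ in a power series in $(1-2x)^2$ about $x=\tfrac12$ and applies Jensen term by term ($\langle v^j\rangle\ge\langle v\rangle^j$), whereas you establish concavity and monotonicity of $u\mapsto H_b\left(\tfrac12+\sqrt{u}\right)$ directly (via $t/(1-t^2)\ge \operatorname{artanh} t$) and apply Jensen once -- the two arguments are equivalent in strength, both giving $\langle H_b(X)\rangle\le H_b\left(\tfrac12\left[1\pm\sqrt{\langle(1-2X)^2\rangle}\right]\right)$, and your explicit derivation of $I_F\le 4\,{\rm Var}_{\hat H}$ supplies a fact the paper merely cites.
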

This shows that a quantum state cannot simultaneously contain maximal work and clock resources. When the clock resource is maximal, $I_F = N^2 \omega_0^2$, no work can be extracted from coherence $W_{\rm coh} = 0$. Conversely, if the extractable work form coherence is maximal, $W_{\rm coh} = N k_B T \log 2$, the state cannot be utilized as a quantum clock as $I_F=0$.
For $N=2$ we derive a tighter inequality:
\begin{equation}
\label{QubitTradeOff}
W_{\rm coh} + ( k_B T  \log 2) \left( \frac{ I_F(\hat\rho, \hat{H})}{4\omega_0^2} \right) \leq k_B T  \log 2 .
\end{equation}

\begin{figure}[t]
\includegraphics[width=0.9\linewidth]{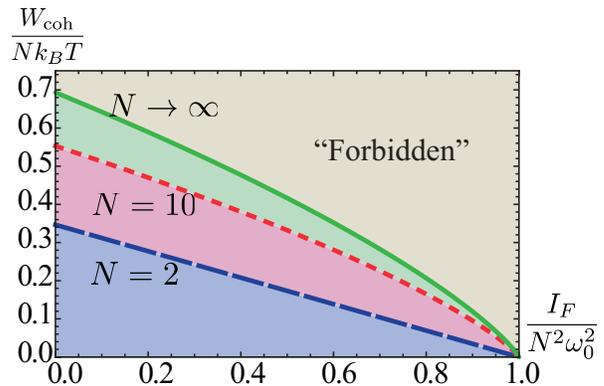}
\caption{Trade-off between work and clock coherences. The solid line refers to Eq.~(\ref{TradeOff}), the dashed line refers to Eq.~(\ref{QubitTradeOff}), and the dotted line refers the tighter bound for $N=10$. }
\label{TradeOffFig}
\end{figure}

We demonstrate that the GHZ state $\ket{\psi_{\rm GHZ}} = 2^{-1/2} (\ket{0}^{\otimes N} + \ket{1}^{\otimes N})$ and Dicke states $\ket{N,k} = \binom{N}{k}^{-1/2}\sum_{P} P ( \ket{1}^{k} \ket{0}^{N-k})$, summing over all permutations $P$ of subsystems, are limiting cases that saturate this trade-off relation.
For a Dicke state, the extractable work from coherence is given by $W_{\rm coh} = k_B T \log \binom{N}{k} \approx  N k_B T (\log 2) H_b(k/N)$.
However, each Dicke state has $I_F = 0$ since it has support on a single energy eigenspace with ${\cal E} = k\omega_0$.
In particular, when $k=N/2$, $W_{\rm coh} = N k_B T \log 2$, attaining the maximal value and saturating the  bounds Eqs.~(\ref{TradeOff}) and (\ref{QubitTradeOff}).
The GHZ state behaves in the opposite way: $\ket{\psi_{\rm GHZ}}$ has maximal QFI $I_F(\ket{\psi_{\rm GHZ}}, \hat{H}) = N^2 \omega_0^2$
while having no internal coherence; thus, $W_{\rm coh}=0$.
In this case, we can see the saturation of both bounds Eqs. (\ref{TradeOff}) and (\ref{QubitTradeOff}).

Furthermore, our two-level trade-off relation can be generalized for an arbitrary noninteracting $N$-particle system.
\begin{theorem}[Clock/work trade-off for arbitrary subsystems]
\label{TradeOffN}
Let $\hat H$ be a noninteracting Hamiltonian of $N$ subsystems, where the $i$th subsystem has an arbitrary (possibly degenerate) $d^{(i)}$-level spectrum $\{ E_1^{(i)} \leq E_2^{(i)} \leq \cdots \leq E_{d^{(i)}}^{(i)} \}$. Also define $\Delta_E^2 = \sum_{i=1}^N (\Delta_E^{(i)})^2 $ with $\Delta_E^{(i)} = E_{d^{(i)}}^{(i)} - E_1^{(i)}$. Then
\begin{equation} \label{GeneralTradeOff}
W_{\rm coh} + k_B T \left( \frac{I_F(\hat\rho,\hat{H})}{2 \Delta_E^2} \right) \leq k_B T \sum_{n=1}^N \log d^{(n)}.
\end{equation}
\end{theorem}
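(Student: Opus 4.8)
\emph{Proof sketch.}---The plan is to bound the work term $W_{\rm coh}$ and the clock term $I_F(\hat\rho,\hat H)$ separately by \emph{classical} quantities that depend only on the degeneracy structure of $\hat H$ and on the total-energy distribution $p_{\mathcal E} := {\rm Tr}[\hat\Pi_{\mathcal E}\hat\rho]$, and then to observe that these two quantities are complementary and together sum to at most $\log\prod_n d^{(n)}$. I will write the energy-block decomposition $\mathcal D(\hat\rho) = \bigoplus_{\mathcal E} p_{\mathcal E}\,\hat\rho_{\mathcal E}$ with $\hat\rho_{\mathcal E}$ normalised and supported on the eigenspace of total energy $\mathcal E$, and set $g_{\mathcal E} := {\rm Tr}\,\hat\Pi_{\mathcal E}$ for the dimension of that eigenspace, so that $\sum_{\mathcal E} g_{\mathcal E} = \prod_{n=1}^N d^{(n)} =: D$.

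For the work term, since $W_{\rm coh}$ is an infimum over $\alpha$ in Eq.~(\ref{Wcoh}), I would simply evaluate the bracket at $\alpha=1$, which yields $W_{\rm coh} \le F_1(\mathcal D(\hat\rho)) - F_1(\Pi(\hat\rho)) = k_B T\,[S(\Pi(\hat\rho)) - S(\mathcal D(\hat\rho))]$ because $\mathcal D(\hat\rho)$ and $\Pi(\hat\rho)$ have equal mean energy. As $\Pi$ acts blockwise, $\Pi(\hat\rho) = \bigoplus_{\mathcal E} p_{\mathcal E}\,\Pi(\hat\rho_{\mathcal E})$, so this entropy difference collapses to $\sum_{\mathcal E} p_{\mathcal E}\,[S(\Pi(\hat\rho_{\mathcal E})) - S(\hat\rho_{\mathcal E})] \le \sum_{\mathcal E} p_{\mathcal E}\log g_{\mathcal E}$, using only $S(\hat\rho_{\mathcal E})\ge 0$ and $S(\Pi(\hat\rho_{\mathcal E}))\le\log g_{\mathcal E}$. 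For the clock term I would use the standard bound $I_F(\hat\rho,\hat H) \le 4\,V(\hat\rho,\hat H)$ (shift $\hat H$ by a scalar so that $\langle\hat H\rangle = 0$, then bound each QFI coefficient $\tfrac{(\lambda_i-\lambda_j)^2}{\lambda_i+\lambda_j}$ by $\lambda_i+\lambda_j$), together with the fact that $V(\hat\rho,\hat H) = \langle\hat H^2\rangle - \langle\hat H\rangle^2$ equals the classical variance ${\rm Var}_p(\mathcal E)$ of $\{p_{\mathcal E}\}$.

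The remaining ingredient, which I expect to be the main obstacle, is a concentration bound on the degeneracies: for every $\mathcal E$ with $g_{\mathcal E}>0$,
$$
\log g_{\mathcal E} \;\le\; \log D - \frac{2(\mathcal E-\mu)^2}{\Delta_E^2},\qquad \mu := \tfrac1D\,{\rm Tr}\,\hat H .
$$
To establish this I would sample a product energy eigenstate $\ket{\boldsymbol E}$ uniformly; then $\mathcal E_{\boldsymbol E} = \sum_i E_i$ is a sum of $N$ independent random variables, the $i$th taking values in an interval of width $\Delta_E^{(i)} = E^{(i)}_{d^{(i)}} - E^{(i)}_1$, and the value $\mathcal E$ occurs with probability $g_{\mathcal E}/D$, so Hoeffding's one-sided inequality gives $g_{\mathcal E}/D \le \exp(-2(\mathcal E-\mu)^2/\sum_i(\Delta_E^{(i)})^2)$. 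Combining this with the two bounds above and with ${\rm Var}_p(\mathcal E) \le \sum_{\mathcal E} p_{\mathcal E}(\mathcal E-\mu)^2$ (the variance never exceeds the mean-square deviation from any fixed reference, here $\mu$) gives
$$
W_{\rm coh} + k_B T\,\frac{I_F(\hat\rho,\hat H)}{2\Delta_E^2} \;\le\; k_B T\sum_{\mathcal E} p_{\mathcal E}\!\left[\log g_{\mathcal E} + \frac{2(\mathcal E-\mu)^2}{\Delta_E^2}\right] \;\le\; k_B T\log D \;=\; k_B T\sum_{n=1}^N\log d^{(n)} ,
$$
which is Eq.~(\ref{GeneralTradeOff}).

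Apart from that concentration estimate, the only step that needs care is the blockwise additivity of the von Neumann entropies used for the work term: this relies on the local product energy basis being subordinate to the total-energy eigenspaces, which holds here precisely because $\hat H$ is noninteracting. As a consistency check, specialising to $N$ equal two-level subsystems ($g_{k\omega_0} = \binom{N}{k}$, $\mu = N\omega_0/2$, $\Delta_E^2 = N\omega_0^2$) reduces the argument to $W_{\rm coh} + k_B T\, I_F(\hat\rho,\hat H)/(2N\omega_0^2) \le N k_B T\log 2$, which is implied by (and weaker than) the sharper two-level relation~(\ref{TradeOff}) after using $\log\binom{N}{k} \le N(\log 2)\,H_b(k/N)$.
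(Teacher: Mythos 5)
Your proposal is correct and follows essentially the same route as the paper's proof: the work term is bounded by the $\alpha=1$ free-energy difference to get $W_{\rm coh}\le k_BT\sum_{\mathcal E}p_{\mathcal E}\log g_{\mathcal E}$ (the paper's Proposition on the work bound), the degeneracies are controlled by exactly the same Hoeffding concentration argument on a uniformly sampled product eigenstate, and the clock term enters via $I_F\le 4\,{\rm Var}_{\hat H}$ together with ${\rm Var}_{\hat H}\le\sum_{\mathcal E}p_{\mathcal E}(\mathcal E-\mu)^2$ (the paper writes this as ${\rm Var}_{\hat H}+(\bar E-\mu)^2$). The only differences are presentational, e.g.\ your blockwise entropy decomposition is a slightly cleaner way of stating the paper's $S(\hat\rho)-S(\hat\sigma)\le\log d$ step.
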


This is more generally applicable than Eq.~(\ref{QubitTradeOff}), but is weaker for two-level subsystems -- maximal $I_F$ does not imply $W_{\rm coh}=0$ via Eq.~(\ref{GeneralTradeOff}).
  
For systems with identical local $d$-level Hamiltonians, Eq.~(\ref{GeneralTradeOff}) reduces to
\begin{equation}
\bar{w}_{\rm coh} + k_B T \left( \frac{I_F(\hat\rho,\hat{H})}{2N^2 \Delta_0^2} \right) \leq k_B T \log d ,
\end{equation}
where $\bar{w}_{\rm coh} = W_{\rm coh} / N$ is extractable work per particle and $\Delta_E^2 = N \Delta_0^2$ where $\Delta_0$ is the maximum energy difference between the local energy eigenvalues. Our bounds do not limit the extractable work in the independent and identically distributed limit, since $I_F(\hat \rho^{\otimes N},\hat H) / N^2 \to 0$ for $N \gg 1$.

We can also describe how one extends our analysis into the regime of weak interactions between local systems. We note that interactions can break degeneracies in energy eigenspaces, and energy eigenstates of the free Hamiltonian may not be expressed as a product of local states. However breaking of degeneracies in energy should only be treated above a finite experimental width $\epsilon$ in energy resolution. Weak perturbations admit a similar analysis in terms of work extraction up to $\epsilon$ fluctuations. This $\epsilon$-energy window allows us to use external coherences with energy gaps less than $\epsilon$ for work extraction by effectively treating them as internal coherences in the same energy levels.
In this case, we can calculate how work and clock resources are perturbed and note that the trade-off relation Eq.~(\ref{GeneralTradeOff}) still holds with an ${\cal O}(\epsilon)$ correction.
We also discuss an example for transverse Ising models in which the trade-off relation can be resolved using a quasiparticle picture (see Ref.~\cite{Supplemental} for details).
The more general interacting case is nontrivial and we leave this for future study.


{\it Remarks.---}
We have found that thermodynamic coherence in a many-body system can be decomposed into time- and energy-related components.
Many-body coherence contributing to the thermodynamic free energy has been shown to be convertible into work by a thermal process,
without changing the classical energy statistics.
We have illustrated that this work-yielding resource comes from correlations due to coherence in a multipartite system.
We have also shown that coherence may take the form of a clock resource, and we have quantified this with the quantum Fisher (skew) information.
Our main result is a trade-off relation between these two different thermodynamic coherence resources.

\begin{acknowledgments}
{\it Acknowledgements.---}
This work was supported by the UK EPSRC (EP/KO34480/1) the Leverhulme Foundation (RPG-2014-055), 
the NRF of Korea grant funded by the Korea government (MSIP) (No. 2010-0018295),
and the Korea Institute of Science and Technology Institutional Program (Project No. 2E26680-16-P025).
D. J. and M.~S.~K. were supported by the Royal Society.
\end{acknowledgments}

\appendix

\widetext
\newpage
\section{Supplemental Material}

\section{Physical assumptions for the analysis.}

For the class of free operations that define the thermodynamic framework, we make the following three physical assumptions \cite{Gour17}.
\begin{enumerate}
\item \textit{Energy is conserved microscopically.} We assume that any quantum operation $\mathcal{E}: \mathcal{B}(\mathcal{H}_A) \rightarrow \mathcal{B}(\mathcal{H}_{A'})$ that is thermodynamically free admits a Stinespring dilation of the form
\begin{equation}\label{micro}
\mathcal{E}(\hat{\rho}_A) = {\rm tr}_C \hat{V} (\hat{\rho}_A \otimes \hat{\sigma}_B) \hat{V}^\dagger
\end{equation}
where the isometry $\hat{V}$ conserves energy microscopically, namely we have
\begin{equation}
\hat{V} (\hat{H}_A \otimes \mathbb{I}_B + \mathbb{I}_A \otimes \hat{H}_B) =  (\hat{H}_{A'} \otimes \mathbb{I}_C + \mathbb{I}_{A'} \otimes \hat{H}_C) \hat{V}.
\end{equation}
 Here $H_S$ denotes the Hamiltonian for the system $S \in \{ A,A', B, C\}$. 
\item \textit{An equilibrium state exists.} We assume that for any systems $A$ and $A'$ there exist states $\hat{\gamma}_A$ and $\hat{\gamma}_{A'}$ such that $\mathcal{E} ( \hat{\gamma}_A) = \hat{\gamma}_{A'}$ for all thermodynamically free operations $\mathcal{E}$ between these two quantum systems. In the case where one admits an unbounded number of free states within the theory, this together with energy conservation essentially forces one to take 
\begin{equation}
\hat{\gamma}_A = \frac{1}{Z_A} e^{-\beta \hat{H}_A},
\end{equation}
at some temperature $T = (k_B\beta)^{-1}$ and $Z_A = {\rm tr} \left[ e^{-\beta \hat{H}_A} \right]$. However one may also consider scenarios in which the thermodynamic equilibrium state deviates from being a Gibbs state. Here we restrict our analysis to thermal Gibbs states.
\item \textit{Quantum coherences are not thermodynamically free.} Since one is interested in quantifying the effects of coherence in thermodynamics, one must not view coherence as a free resource that can be injected into a system without being included in the accounting. We therefore assume that the thermodynamically free operations do not smuggle in coherences in the following sense: if $\mathcal{E}$ has a microscopic description of the form \ref{micro} then the same operation is possible with $\hat{\sigma}_B \rightarrow \mathcal{D} (\hat{\sigma}_B)$ and $\hat{V} \rightarrow \hat{W}$ being some other energy conserving isometry. In other words \emph{no coherences in $\hat{\sigma}_B$ are exploited for free}. It can be shown this assumption has the mathematical consequence that
\begin{equation}
\mathcal{E}( e^{-it \hat{H}_A} \hat{\rho}_A e^{it \hat{H}_A}) = e^{-i t \hat{H}_{A'}} \mathcal{E}(\hat{\rho}_A) e^{i t \hat{H}_{A'}},
\end{equation}
for any translation through a time interval $ t \in \mathbb{R}$, namely covariance under time-translations.
\end{enumerate}
The first assumption is simply a statement of energy conservation, while the second assumption singles out a special state that is left invariant under the class of thermodynamic processes. Note that this equilibrium state could be a \emph{non-Gibbsian} state and the formalism in \cite{Gour17} would still apply, however for our analysis here, we shall take the state to be a thermal Gibbs state, simply because this provides us with a notion of temperature and a comparision with traditional equilibrium thermodynamics. Finally the last assumption can be understood as a criterion for non-classicality for the framework, and the set of quantum operations defined by these three physical assumptions is called (generalized) \emph{thermal processes} (TPs). By using these three physical assumptions (1--3), one can show that a set of TPs coincide with the set of time-translational covariant Gibbs-preserving maps \cite{Gour17}.

We emphasize that TPs are different from thermal operations (TOs)  \cite{Janzing2000,Horodecki13}. A key difference is that TOs are defined by a energy conserved unitary operation between a system and a equilibrium bath $\hat\gamma_B = Z_B^{-1} e^{-\beta \hat{H}_B}$.
$${\rm tr}_B \hat{V} (\hat\rho_A \otimes \hat\gamma_B) \hat{V}^\dagger,$$
satisfying $[\hat{V}, \hat{H}_A + \hat{H}_B] =0$. Crucially, TOs fix the auxiliary system to be in the Gibbs state. One might think that the assumptions (1--3) imply that TP coincides with TO but this is not the case. The operations in TO all obey (1--3) and so $TO \subseteq TP$, however there exist transformations in TP that are not in TO. To show this, consider the regime in which all Hamiltonians are trivial, namely $\hat{H}=0$. For this scenario the equilibrium Gibbs state becomes the maximally mixed state $\frac{1}{d} \mathbb{1}$ and conditions (1) and (3) are trivially true for any map. The set of TOs for this situation coincide with the set of \emph{noisy operations}, while the set of TPs coincide with the set of \emph{unital maps}. It is known that these two sets are not the same.

However it is known that the \emph{state interconversion structure} of noisy operations \emph{coincides} with that of unital maps: we have $\hat\rho \longrightarrow \hat\sigma$ under a noisy operation if and only if it is possible under a unital operation. Therefore one might conjecture that these two classes of operations have essentially the same ``power'', in the sense just described. It has been shown in \cite{Gour17} that when restricted to states block-diagonal in energy that TP and TO coincide exactly in terms of state interconversion (both are governed by thermo-majorization), however at present it is unclear how/whether the interconversion structure of these two classes differ for states with coherence, and there is no obvious physical principle to choose one over the other. However, one nice aspect of TPs is that they admit a complete description in terms a single family of entropies that have a natural interpretation, while at present no complete set exists for TOs.

Note that the third assumption only accounts for \emph{external} coherences within the framework. The way in which we account for internal coherences in our analysis is to demand that the diagonal components (in the basis $|\mathbf{E} \rangle$) of the state are left invariant by the evolutions considered. Also note that if assumptions (1) and (3) hold then the isometry $\hat{W}$ can be taken to be equal to $\hat{V}$.

\section{Work extraction from a pure state coherence}
\subsection{Proof of Observation~1}
We find the conditions under which work can be deterministically extracted from coherence in a pure state.
After energy block diagonalizing, the state can be written as ${\cal D}(\ket{\Psi}\bra{\Psi}) = \sum_{\cal E} p_{\cal E} \ket{\psi_{\cal E}} \bra{\psi_{\cal E}}$, where $\ket{\psi_{\cal E}}$ are pure eigenstates of energy ${\cal E}$.
Suppose $\cal E^*$ gives the maximum value of $\log p_{\cal E} e^{\beta \cal E}$.
Then we have 
\begin{equation}
\begin{aligned}
W_{\rm coh} &= \inf_\alpha \left[ F_\alpha({\cal D}(\hat\rho) ) -  F_\alpha(\Pi(\hat\rho)) \right] \\
&= \inf_\alpha \left( \frac{1}{\alpha-1}  \right) 
\log \left[ \frac{\sum_{\cal E} p_{{\cal E}}^\alpha e^{-\beta(1-\alpha)\cal E} }
{\sum_{\cal E} e^{(1-\alpha) S_\alpha ( \hat\rho_{{\cal E}-\rm diag} ) } p_{{\cal E}}^\alpha e^{-\beta(1-\alpha)\cal E} } \right],
\end{aligned}
\end{equation}
where $\hat\rho_{{\cal E}-\rm diag} = \Pi( \ket{\psi_{\cal E}} \bra{\psi_{\cal E}})$ is a fully dephased state in the energy eigenspace ${\cal E }$.  
Then we notice that $S_\alpha ( \hat\rho_{{\cal E}-\rm diag} ) )  > 0$ for any $\alpha \in [0, \infty)$, unless $\hat\rho_{{\cal E}-\rm diag}$ is incoherent (i.e. no internal coherence for $\cal E$).
This leads to 
$ F_\alpha({\cal D}(\hat\rho) ) -  F_\alpha(\Pi(\hat\rho)) >0 $ for any finite value of $\alpha$.
In the limit $\alpha \rightarrow \infty$, $F_\infty ({\cal D}(\hat\rho) ) -  F_\infty (\Pi(\hat\rho)) = \log p_{\cal E^*} e^{\beta \cal E^*} - \max_{{\cal E}, \lambda_{\cal E} } p_{\cal E} \lambda_{\cal E} e^{\beta \cal E} > 0$, unless $\hat\rho_{{\cal E^*}-\rm diag}$ is incoherent. Here, $\lambda_{\cal E}$ are eigenvalues of $\hat\rho_{{\cal E}-\rm diag}$.
 Thus if a pure state does not contain internal coherence for ${\cal E}^*$, $W_{\rm coh} \leq F_\infty ({\cal D}(\hat\rho) ) -  F_\infty (\Pi(\hat\rho)) =0$.
 Conversely, if the state contains internal coherence for ${\cal E}^*$, 
$F_\alpha ({\cal D}(\hat\rho) ) -  F_\alpha (\Pi(\hat\rho)) >0$ for all
$\alpha \in [0,\infty)$, thus positive work can be extracted.

\subsection{Work extraction condition for a bipartite two-level system}
We consider work extraction from coherence in a two-qubit system with local energy difference $\omega_0$ in each qubit, starting from a pure state of the form
$$
\ket{\psi} = \sqrt{p_0} \ket{00} +  \sqrt{p_1} \left( \frac{ \ket{01} + \ket{10}}{ \sqrt{2} } \right)+ \sqrt{p_2} \ket{11} .
$$
Observation 1 says that $p_1$ should be large enough to extract work under a single-shot thermal operation.
In this case, the condition from Observation 1 can be written  as
$$
\begin{cases}
p_1 e^{\beta \omega_0} > p_0 = 1 - p_1 - p_2 \\
p_1 e^{\beta \omega_0} > p_2 e^{2\beta\omega_0} .
\end{cases}
$$
This leads to a necessary condition for extracting a positive amount of work $W_{\rm ext} > 0$ from coherence:
$$
p_1 > \frac{1}{1+e^{\beta\omega_0} + e^{-\beta\omega_0}}
$$

Thus if $0< p_1 < (1+e^{\beta\omega_0} + e^{-\beta\omega_0})^{-1}$, we cannot extract work from coherenc,e even though the state definitely contains internal coherence of the form $\ket{01} + \ket{10}$.
On the other hand, a sufficient condition for $W_{\rm ext} >0$ can be obtained:
$$
p_1 > \frac{e^{\beta\omega_0}}{1+e^{\beta\omega_0}}.
$$

\section{An example of the quantum Fisher (skew) information imposing independent constraints from $F_{\alpha}$ or $A_\alpha$}
We present an example showing that our asymmetry quantifiers give constraints on quantum thermodynamics independent from those due to the free energies $F_\alpha$ or the coherence measures $A_\alpha$. Let us consider the transformation by a thermal process of the initial state
$$\hat\rho =
\left(
\begin{array}{cccc}
 0.5 & 0 & 0.1 & 0.1 \\
 0 & 0.2 & 0 & 0 \\
 0.1 & 0 & 0.25 & 0.1 \\
 0.1 & 0 & 0.1 & 0.05 \\
\end{array}
\right)
$$
to the final state
$$
\hat\sigma = 
\left(
\begin{array}{cccc}
 0.5 & 0.099 & 0.099 & 0.099 \\
 0.099 & 0.25 & 0 & 0 \\
 0.099 & 0 & 0.2 & 0 \\
 0.099 & 0 & 0 & 0.05 \\
\end{array}
\right)
$$
with the Hamiltonian $\hat{H} = \sum_{n=0}^3 n \omega \ket{n}\bra{n}$. It can be checked that the free energies $F_\alpha$ and coherence measures $A_\alpha$ of the initial state $\hat\rho$ are larger than those of the final state $\hat\sigma$. Furthermore, each mode of coherence is decreased from $0.1$ to $0.099$. However, the skew information values for $\alpha=1/2$ are given by $I_{1/2}(\hat\rho,\hat{H}) = 0.153$ and $I_{1/2}(\hat\sigma,\hat{H}) = 0.163$; the quantum Fisher information values are $I_F(\hat\rho,\hat{H}) = 0.843$  and $I_F(\hat\sigma,\hat{H}) = 0.959$ (all in units of $\omega^2$). Thus a thermal process cannot transform $\hat\rho$ into $\hat\sigma$, but such a transformation is not disallowed by the restrictions given by $F_\alpha$ or $A_\alpha$.

This is due to that the coherence monotones given by the quantum Fisher information and skew information capture not only the degree of coherence between different energy eigenstates, but also take account of  how much energy level spacing exists in each coherence term.

\section{Clock/work trade-off relation: Two-level local Hamiltonian systems}
 We first prove the following proposition.
 \begin{proposition}[Work bound] For a given energy distribution $p_{\cal E}$, the extractable work from coherence is upper bounded as follows:
\begin{equation}
W_{\rm coh} \leq k_B T \sum_{\cal E} p_{\cal E} \log g_{\cal E},
\end{equation}
where $g_{\cal E}$ is the dimension of the eigenspace of energy level ${\cal E}$.
\label{Prop1}
\end{proposition}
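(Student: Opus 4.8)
The plan is to bound the infimum over $\alpha$ that defines $W_{\rm coh}$ by its value at $\alpha=1$, where $F_\alpha$ collapses to the ordinary free energy $F_1(\hat\tau) = \Tr[\hat\tau\hat H] - k_B T\, S(\hat\tau)$ with $S$ the von Neumann entropy (the $k_B T\log Z$ terms cancel in the difference). The key first observation is that ${\cal D}(\hat\rho)$ and $\Pi(\hat\rho)$ carry exactly the same total-energy statistics $p_{\cal E}$ -- indeed $\hat H$ commutes with every $\hat\Pi_{\cal E}$ and every $\hat\Pi_{\boldsymbol E}$, so $\Tr[{\cal D}(\hat\rho)\hat H] = \Tr[\Pi(\hat\rho)\hat H] = \sum_{\cal E}{\cal E}p_{\cal E}$. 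Hence the energy contributions drop out and
$$
W_{\rm coh} \;\le\; F_1({\cal D}(\hat\rho)) - F_1(\Pi(\hat\rho)) \;=\; k_B T\,\bigl[\,S(\Pi(\hat\rho)) - S({\cal D}(\hat\rho))\,\bigr],
$$
reducing the claim to a purely entropic inequality. One should remark that $\alpha=1$ sits on the boundary of the piecewise definition of $S_\alpha$, but $F_\alpha$ is continuous there, its $\alpha\to1$ limit being the Umegaki relative entropy, so evaluating the infimum bound at $\alpha=1$ is legitimate.

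Next I would exploit the block structure over total energy. Write ${\cal D}(\hat\rho) = \sum_{\cal E} p_{\cal E}\,\hat\rho_{\cal E}$ with $\hat\rho_{\cal E} := \hat\Pi_{\cal E}\hat\rho\,\hat\Pi_{\cal E}/p_{\cal E}$ a density operator on the $g_{\cal E}$-dimensional eigenspace of total energy ${\cal E}$. Since these supports are mutually orthogonal, the direct-sum identity for the von Neumann entropy gives $S({\cal D}(\hat\rho)) = H(\{p_{\cal E}\}) + \sum_{\cal E} p_{\cal E} S(\hat\rho_{\cal E})$, where $H$ is the Shannon entropy. The dephasing $\Pi$ acts within each energy block and respects the decomposition, so $\Pi(\hat\rho) = \sum_{\cal E} p_{\cal E}\,\Pi(\hat\rho_{\cal E})$ and likewise $S(\Pi(\hat\rho)) = H(\{p_{\cal E}\}) + \sum_{\cal E} p_{\cal E} S(\Pi(\hat\rho_{\cal E}))$. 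The $H(\{p_{\cal E}\})$ terms cancel, leaving
$$
S(\Pi(\hat\rho)) - S({\cal D}(\hat\rho)) \;=\; \sum_{\cal E} p_{\cal E}\,\bigl[\,S(\Pi(\hat\rho_{\cal E})) - S(\hat\rho_{\cal E})\,\bigr].
$$

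Finally, I would bound each bracket term by term: $\Pi(\hat\rho_{\cal E})$ is a state on a $g_{\cal E}$-dimensional space, hence $S(\Pi(\hat\rho_{\cal E})) \le \log g_{\cal E}$, while $S(\hat\rho_{\cal E}) \ge 0$; summing against $p_{\cal E}$ gives $S(\Pi(\hat\rho)) - S({\cal D}(\hat\rho)) \le \sum_{\cal E} p_{\cal E}\log g_{\cal E}$, and combining with the reduction of the first paragraph establishes the proposition. There is no real obstacle here -- the argument is essentially the entropy chain rule plus the dimension bound -- and the only mild subtlety is the admissibility of the $\alpha=1$ evaluation noted above (one could alternatively keep a generic $\alpha$ and take the limit). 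It is worth recording that the bound is saturated precisely when each $\hat\rho_{\cal E}$ is pure and has the flat dephased form, which is exactly the case of Dicke states, dovetailing with their role as extremal points of the clock/work trade-off in the main text.
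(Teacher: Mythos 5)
Your proof is correct and follows essentially the same route as the paper's: bound the infimum by its $\alpha=1$ (standard free-energy) value, reduce to $k_B T\left[S(\Pi(\hat\rho)) - S({\cal D}(\hat\rho))\right]$ using the identical energy statistics, decompose over total-energy blocks, and apply the dimension bound $\log g_{\cal E}$ blockwise, which is just a more explicit version of the paper's use of $S(\hat\rho)-S(\hat\sigma)\le\log d$ within each block. The only caveat is your closing aside: the claim that the bound is saturated \emph{precisely} when each block is pure with flat dephasing is not established, since $W_{\rm coh}$ is an infimum over all $\alpha$ and for several occupied energy levels this need not coincide with the $\alpha=1$ entropy difference (it does for Dicke states, which occupy a single level), but this does not affect the proposition itself.
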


\begin{proof}
 Note that
\begin{equation}
\begin{aligned}
W_{\rm coh} &= \inf_\alpha \left[ F_\alpha({\cal D}(\hat\rho)) - F_\alpha( \Pi(\hat\rho)) \right] \\
&\leq F({\cal D}(\hat\rho)) - F( \Pi(\hat\rho)) \\
&= k_B T \left[ S( \Pi(\hat\rho))  - S({\cal D}(\hat\rho)) \right].
\end{aligned}
\end{equation}
Since both $\Pi(\hat\rho)$ and ${\cal D}(\hat\rho)$ are energy-block diagonal, we can express 
$\Pi(\hat\rho) = \sum_{{\cal E}, \lambda} p^\Pi_{{\cal E}, \lambda} \ket{{\cal E}, \lambda}\bra{{\cal E}, \lambda}$ and ${\cal D}(\hat\rho)= \sum_{{\cal E}, \lambda} p^{\cal D}_{{\cal E}, \lambda} \ket{{\cal E}, \lambda}\bra{{\cal E}, \lambda}$ for $\lambda = 1, 2, \cdots, g_{\cal E}$ with $\sum_{\lambda=1}^{g_{\cal E}} p^\Pi_{{\cal E}, \lambda} = \sum_\lambda p^{\cal D}_{{\cal E}, \lambda} =p_{\cal E}$.  Then we have
\begin{equation}
\begin{aligned}
S(\Pi(\hat\rho)) - S({\cal D}(\hat\rho)) &= \sum_{\cal E} p_{\cal E} \sum_{\lambda=1}^{g_{\cal E}} \left[ \frac{p^{\cal D}_{{\cal E}, \lambda}}{p_{\cal E}} \log \frac{p^{\cal D}_{{\cal E}, \lambda}}{p_{\cal E}} - \frac{p^{\Pi}_{{\cal E}, \lambda}}{p_{\cal E}} \log \frac{p^{\Pi}_{{\cal E}, \lambda}}{p_{\cal E}} \right]  \\
&\leq \sum_{\cal E} p_{\cal E} \log g_{\cal E},
\end{aligned}
\end{equation}
since $S(\hat\rho) - S(\hat\sigma) \leq \log d$ for $d$-dimensional states $\hat\rho$ and $\hat\sigma$.
\end{proof}

\hfill
\subsection{Proof of Theorem~1}
Here we provide a complete proof of Theorem~1 that extractable work from coherence is upper bounded by the quantum Fisher information:
\begin{equation}
W_{\rm coh} \leq k_B T N (\log 2) H_b\left( \frac{1}{2} \left[1-\sqrt{\frac{I_F(\hat\rho, \hat{H})}{N^2 \omega_0^2}} \right] \right).
\end{equation}
\begin{proof}
In an $N$-particle two-level system with energy difference $\omega_0$, the degeneracy of the energy level ${\cal E}$ is given by 
$g_{\cal E} = \binom{N}{n}$, where ${\cal E} = \omega_0 n$.
By using the fact that
$$ \binom{N}{n} \leq 2^{ N H_b \left(\frac{n}{N} \right)}$$
for every $N$ and $n$, we obtain
\begin{equation}
W_{\rm coh} \leq k_B T \sum_{\cal E} p_{\cal E} \log \binom{N}{n} \leq N k_B T ( \log 2 ) \sum_{\cal E} p_{\cal E}  H_b(n/N),
\label{ExtBound1}
\end{equation}
where Proposition~1 has been applied to obtain the first inequality.

Furthermore, we can express the binary entropy as
$$
H_b(x) = 1 - \frac{1}{2 \log 2} \sum_{j=1}^\infty \frac{(1-2x)^{2j}}{j(2j-1)}.
$$
For a given probability distribution $p_x$ and $j \geq 1$ we have
$$
 \sum_x p_x (1-2x)^{2j} \geq \left[\sum_x p_x (1- 2x)^2 \right]^j = (1- 2y)^{2j},
$$
where $y = \frac{1}{2} \left[ 1 \pm \sqrt{(1-2\bar{x})^2 + 4 {\rm Var}_x} \right]$
with $\bar{x} = \sum_x p_x x$ and ${\rm Var}_x = \sum_x p_x (x - \bar{x})^2$.
Then we have 
$$
\begin{aligned}
\sum_x p_x H_b(x) &= 1 - \frac{1}{2 \log 2} \sum_{j=1}^\infty \sum_x p_x \frac{(1-2x)^{2j}}{j(2j-1)} \\
& \leq 1 - \frac{1}{2 \log 2} \sum_{j=1}^\infty \frac{(1-2y)^{2j}}{j(2j-1)} \\
& = H_b(y).
\end{aligned}
$$
By substituting this result into Eq.~(\ref{ExtBound1}), we obtain

\begin{equation}
\label{BoundN}
\begin{aligned}
W_{\rm coh} & \leq N k_B T (\log 2) H_b \left( \frac{1}{2} \left[ 1 \pm \sqrt{ \left(1- \frac{2 \bar{E}}{N\omega_0}\right)^2 + \frac{ {4 \rm Var}_{\hat{H}}}{N^2 \omega_0^2} } \right]\right),
\end{aligned}
\end{equation}
where $\bar{E} = \langle \hat{H} \rangle_{\hat\rho}$ and ${\rm Var}_{\hat{H}} = \langle (\hat{H} - \bar{E})^2 \rangle_{\hat\rho} $.
Note that $H_b$ is symmetric about $x=1/2$ and monotonically increasing for $x \leq 1/2$.
We also note that $4 {\rm Var}_{\hat{H}} \geq I_F(\hat\rho, \hat{H})$ for every quantum state $\hat{\rho}$.
These observations lead to 
$H_b \left( \frac{1}{2} \left[ 1 \pm \sqrt{ \left(1- \frac{2 \bar{E}}{N\omega_0}\right)^2 + \frac{ {4 \rm Var}_{\hat{H}}}{N^2 \omega_0^2} } \right] \right) \leq H_b \left( \frac{1}{2} \left[ 1 - \sqrt{  \frac{ I_F(\hat\rho, \hat{H})}{N^2 \omega_0^2} } \right] \right)$, which completes the proof.
\end{proof}

\subsection{Tighter bound of the trade-off relation}
The bound from Theorem~1 can be tightened.
We have observed that
\begin{equation}
\label{BinomBound}
\binom{N}{rN} \leq \binom{N}{N/2}^{H_b(r)}
\end{equation}
or equivalently
$$
\log \binom{N}{rN} \leq  {H_b(r)} \log \binom{N}{N/2}
$$
for every $0 \leq r \leq 1$ and $N$ up to $N=100$. The binomial coefficient for an odd number of $N$ is defined as $\binom{N}{N/2} := \frac{\Gamma(N+1)}{\Gamma (N/2+1)^2}$ by using the Gamma function $\Gamma(z) = \int_0^{\infty} x^{z-1} e^{-x} dx$.
Figure~\ref{BinomialFig} shows that the inequality (\ref{BinomBound}) well holds for $N \leq100$ and seemingly holds for every number of $N$, yet the proof for a general case has not been found.

\begin{figure}[t]
\includegraphics[width=0.43\linewidth]{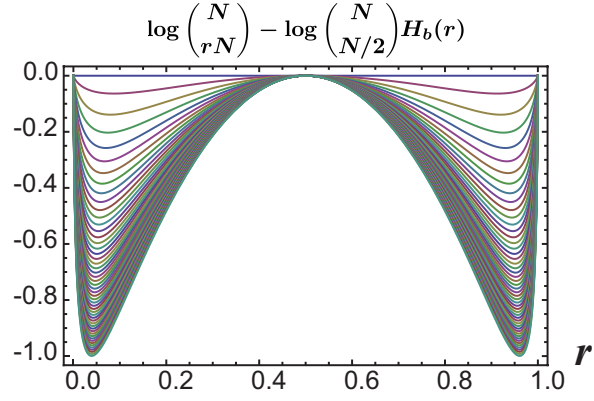}
\caption{Numerical verification of the bound $\log \binom{N}{n} \leq \log \binom{N}{N/2} H_b(n/N)$ for $N \leq 100$.
The value of $N$ increases from the upper most curve ($N=2$) to the lowest curve ($N=100$).
}
\label{BinomialFig}
\end{figure}

From the inequality above, we obtain a tighter bound of the trade-off relation \ref{BoundN} by taking $n=rN$ and replacing $N \log 2$ with $\log \binom{N}{N/2}$:
\begin{equation}
\begin{aligned}
W_{\rm coh} & \leq k_B T \log \binom{N}{N/2} H_b \left( \frac{1}{2} \left[ 1 - \sqrt{ \frac{ I_F(\hat\rho, \hat{H})}{N^2 \omega_0^2} } \right]\right),
\end{aligned}
\end{equation}
When $N \gg 1$, we note that $N \log 2 \approx \log \binom{N}{N/2}$ then the bound approaches to the bound of Theorem~1.

\subsection{Clock/work trade-off for $N=2$}
We show the tighter trade-off relation between clock/work resources for $N=2$ case:
\begin{equation}
W_{\rm coh} + ( k_B T  \log 2) \left( \frac{ I_F(\hat\rho, \hat{H})}{4\omega_0^2} \right) \leq k_B T  \log 2 .
\end{equation}

\begin{proof}
Suppose the state has probability $p_0$, $p_1$, and $p_2$ for each energy level $0$, $\omega_0$ and $2\omega_0$.
By using Eq.~(\ref{ExtBound1}) for $N=2$, we have $W_{\rm coh} \leq k_B T (\log 2) p_1$, since the state has a degeneracy in the energy-eigenspace only for ${\cal E} = \omega_0$.
In this case, energy variance ${\rm Var}_{\hat{H}}$ is given by
$$
{\rm Var}_{\hat{H}} = \omega_0^2 (-p_1^2 + p_1 - 4p_2^2 + 4p_2 - 4 p_1 p_2),
$$
which leads to the maximum value of $p_1$,
$$
p_1^{\rm max} = 1 - {\rm Var}_{\hat{H}} / \omega_0^2,
$$
for a given value of ${\rm Var}_{\hat{H}}$.
Again, we can use $4 {\rm Var}_{\hat{H}} \geq I_F(\hat\rho,\hat{H})$ to get
$$
W_{\rm coh} \leq k_B T (\log 2) p_1^{\rm max} =  k_B T (\log 2) \left( 1 - \frac{I_F(\hat\rho, \hat{H})}{4 \omega_0^2} \right).
$$
which is the desired inequality.
\end{proof}

\section{Clock/work trade-off relation: General case}
\subsection{Proof of Theorem~2}
We prove the statement of Theorem~2:
\begin{equation}
W_{\rm coh} + k_B T \left( \frac{I_F(\hat\rho,\hat{H})}{2 \Delta_E^2} \right) \leq k_B T \sum_{i=1}^N \log d^{(i)},
\end{equation}
where $\Delta_E^2 = \sum_{i=1}^N (\Delta_E^{(i)})^2 $ with $\Delta_E^{(i)}$ is the maximum energy difference of the $i$th subsystem.
\begin{proof}
In this case, the degeneracy $g_{\cal E}$ of the energy ${\cal E}$ is given by
$$
g_{\cal E} = \prod_{i=1}^N d^{(i)} f_{\cal E},
$$
where $f_{\cal E} = \sum_{{\cal E}_{\boldsymbol E} = {\cal E}} P(\boldsymbol E)$ is a probability (or frequency) to have the energy $\cal E$ in the $N$-particle system, since $\prod_{i=1}^N d^{(i)}$ is total possible numbers of $\boldsymbol E$.
Then $f_{\cal E}$ can be considered as a probability distribution of a variable
$X_N = \sum_{i=1}^N E_i$
from the distribution of independent random variables of $E_i$ for $i$th party.
In our case, $E_i$ is strictly bounded by $E^{(i)}_1  \leq E_n \leq E^{(i)}_{d^{(i)}}$ and it has the same probability $P(E_i = E^{(i)}_j) = 1/d^{(i)}$ for every $j=1,2,\cdots, d^{(i)}$ and zero for all other cases.
Hoeffding's inequality \cite{Heffding63}, then shows that
$$
P(X_N- \mu_E \geq t ) \leq \exp\left[{-\frac{2 t^2} {\Delta_E^2}}\right],
$$
where $\displaystyle \mu_E :=  \mathbb{E} (X_N) = \mathbb{E} \left( \sum_{i=1}^N E_i \right)   = \sum_{i=1}^N \sum_{j=1}^{d^{(i)}} \left( \frac{E^{(i)}_j}{d^{(i)}} \right)$ and  $\Delta_E^2 = \sum_{i=1}^N (\Delta_E^{(i)})^2 $.
Using this, the upper bound of $f_{\cal E}$ is given by
$$
f_{\cal E} = P(  X_N = {\cal E}) \leq P(  X_N \geq {\cal E}) \leq \exp \left[ -\frac{2( {\cal E} - \mu_E )^2 }{\Delta_E^2} \right].
$$

Again using Proposition 1, we have
\begin{equation}
\begin{aligned}
W_{\rm coh} & \leq k_B T \sum_{\cal E} p_{\cal E} \log g_{\cal E} \\
& = k_B T \sum_{\cal E} p_{\cal E} \log \left( \prod_{i=1}^N d^{(i)} f_{\cal E} \right) \\
&= k_B T \sum_{i=1}^N \log d^{(i)} + k_B T \sum_{\cal E} p_{\cal E} \log f_{\cal E} \\
&\leq k_B T \sum_{i=1}^N \log d^{(i)} - \frac{2 k_B T}{\Delta_E^2} \sum_{\cal E} p_{\cal E} ({\cal E} - \mu_E)^2 \\
&= k_B T \sum_{i=1}^N \log d^{(i)} - \frac{2 k_B T}{\Delta_E^2} {\rm Var}_{\hat{H}} - \frac{2 k_B T}{\Delta_E^2} (\bar{E} - \mu_E)^2  \\
&\leq k_B T \sum_{i=1}^N \log d^{(i)} - k_B T \left( \frac{I_F(\hat\rho, \hat{H})}{ 2 \Delta_E^2} \right),
\end{aligned}
\end{equation}
where the last inequality is from the fact $4 {\rm Var}_{\hat{H}} \geq I_F(\hat\rho, \hat{H})$.
\end{proof}

\subsection{Trade-off relation by allowing a small energy resolution window}
In a real experimental situation, it is hard to access an exact energy level with prefect prevision, so we may permit a finite energy gap $\epsilon$ in energy levels.
Under this assumption, we view states with an $\epsilon$-energy gap to be ``essentially the same'' energy and so can carry internal coherences for approximate work extraction.
In order to introduce the energy gap $\epsilon$, we divide the energy spectrum into intervals
$$
{\cal E}_m = 
\begin{cases}
\left[\mu_E + \left(m-\frac{1}{2} \right) \epsilon, \mu_E + \left(m+\frac{1}{2} \right) \epsilon \right) &{\rm for~}m>0\\
\left(\mu_E + \left(m-\frac{1}{2} \right) \epsilon, \mu_E + \left(m+\frac{1}{2} \right) \epsilon \right] &{\rm for~}m<0\\
\left(\mu_E + \left(m-\frac{1}{2} \right) \epsilon, \mu_E + \left(m+\frac{1}{2} \right) \epsilon \right) &{\rm for~}m=0
\end{cases},
$$
where each interval has an energy width $\epsilon$.

Consequently, we define the energy distribution for $m$th interval $p_m^\epsilon = \sum_{{\cal E} \in {\cal E}_m} p_{\cal E}$ and its degeneracy $g_m^\epsilon =  \sum_{{\cal E} \in {\cal E}_m} g_{\cal E}$, respectively.
If we allow the $\epsilon$ energy gap for internal coherences, the amount of extractable work is upper bounded by
\begin{equation}
\begin{aligned}
W_{\rm coh}^\epsilon &\leq k_B T \sum_m p^\epsilon_m \log g^\epsilon_m \\
&=  k_B T \sum_{i=1}^N \log d^{(i)} +  k_B T\sum_m p^\epsilon_m \log f^\epsilon_m,
\end{aligned}
\end{equation}
where $f_m^\epsilon$ is the frequency to be in the $m$th energy interval. 
The upper bound of $f_m^\epsilon$ is then given by
$$
f_m^\epsilon = P ( X_N \in {\cal E}_m) \leq P\left(|X_N -  (\mu_E +   \epsilon m)| \geq \frac{1}{2}\epsilon \right) \leq 
\begin{cases}
\exp\left[{-\frac{2(|m|-1/2)^2 \epsilon^2}{\Delta^2_E}}\right] & (m \neq 0)\\
1 & (m=0)
\end{cases}.
$$
By following the same argument with Theorem~2, we have 
\begin{equation}
\begin{aligned}
\label{EpsilonI}
W_{\rm coh}^\epsilon &\leq  k_B T \sum_{i=1}^N \log d^{(i)} -  k_B T \sum_{m\neq0} p^\epsilon_m \left[ \frac{2(|m|-\frac{1}{2})^2\epsilon^2}{\Delta_E^2} \right] \\
 &=  k_B T \sum_{i=1}^N \log d^{(i)}  -   \frac{2k_B T}{\Delta_E^2} \left[ \sum_m p^\epsilon_m |m|^2 \epsilon^2 -  \epsilon \sum_{m} p^\epsilon |m| +\frac{1}{4} \epsilon^2 \sum_{m \neq 0}p_m^\epsilon \right].
\end{aligned}
\end{equation}

When ${\cal E} \in {\cal E}_m$, we use the fact that  $\mu_E + (m - 1/2) \epsilon \leq {\cal E} \leq \mu_E + (m + 1/2) \epsilon$ to show 
$$
\begin{aligned}
\sum_{\cal E} p_{\cal E} ({\cal E} - \mu_E)^2 &= \sum_m \sum_{{\cal E} \in {\cal E}_m} p_{\cal E} ({\cal E} - \mu_E)^2 \\
&\leq \sum_m \left(  \sum_{{\cal E} \in {\cal E}_m} p_m^\epsilon \right) \left( m^2\epsilon^2 + |m| \epsilon + \frac{1}{4} \epsilon^2 \right) \\
&= \sum_m p_m^\epsilon m^2 \epsilon^2 + \epsilon \sum_m p^\epsilon_m |m| + \frac{1}{4} \epsilon^2.
\end{aligned}
$$
By substituting this inequality into (\ref{EpsilonI}), then we finally get the following trade-off relation by allowing a small energy gap $\epsilon$, 
\begin{equation}
W_{\rm coh}^\epsilon + k_B T \left( \frac{I_F(\hat\rho,\hat{H})}{2 \Delta_E^2} \right) \leq k_B T \sum_{i=1}^N \log d^{(i)} + R(\epsilon),
\end{equation}
where the correction term is given by
$$
R(\epsilon) = \frac{k_B T}{\Delta_E^2} \left[ {2 \epsilon} \sum_m p^\epsilon_m |m| - \epsilon^2 p_0^\epsilon \right].
$$

We additionally analyse how the quantum Fisher information is perturbed under the same energy resolution window. 
In this case, the Hamiltonian corresponds to the energy levels $\{ {\cal E}_m\}$ can be expressed as
$\hat{H} + (\epsilon/2) \hat{H}_I$, where $(\epsilon/2)\hat{H}_I$  with $|| \hat{H}_I ||_\infty \leq 1$ fills the $\epsilon$-energy gaps of $\hat{H}$.
Then the quantum Fisher information with respect to this Hamiltonian is given by
\begin{equation}
\label{QFIperturb}
\begin{aligned}
I_F^\epsilon (\hat\rho, \hat{H}) &= I_F\left(\hat\rho, \hat{H} + \frac{\epsilon}{2} \hat{H}_I\right)\\
&=2 \sum_{j,k} \frac{(\lambda_j - \lambda_k)^2}{\lambda_j + \lambda_k} |\bra{\psi_j} \hat{H} + \frac{\epsilon}{2} \hat{H}_I \ket{\psi_k}|^2\\
&= 2 \sum_{j,k} \frac{(\lambda_j - \lambda_k)^2}{\lambda_j + \lambda_k}  \left[ |\bra{\psi_j} \hat{H}\ket{\psi_k}|^2 + \epsilon \bra{\psi_j} \hat{H}\ket{\psi_k}\bra{\psi_k} \hat{H}_I\ket{\psi_j}+ \frac{\epsilon^2}{4}  |\bra{\psi_j} \hat{H}_I \ket{\psi_k}|^2 \right]\\
&= I_F(\hat\rho, \hat{H}) + 2\epsilon \sum_{j,k} \frac{(\lambda_j - \lambda_k)^2}{\lambda_j + \lambda_k}\bra{\psi_j} \hat{H}\ket{\psi_k}\bra{\psi_k} \hat{H}_I\ket{\psi_j} + \frac{\epsilon^2}{4} I_F(\hat\rho, \hat{H}_I),
\end{aligned}
\end{equation}
where  $\lambda_j$ and $\ket{\psi_j}$ are eigenvalue and eigenstate of $\hat\rho$, respectively.

By using the fact $\left|\sum_{j,k} \frac{(\lambda_j - \lambda_k)^2}{\lambda_j + \lambda_k}\bra{\psi_j} \hat{H}\ket{\psi_k}\bra{\psi_k} \hat{H}_I\ket{\psi_j} \right| \leq \left(\sum_{j,k} \frac{(\lambda_j - \lambda_k)^2}{\lambda_j + \lambda_k} \right) ||\hat{H}||_\infty ||\hat{H}_I||_\infty \leq 2 ||\hat{H}||_\infty $, we observe that the $\epsilon$-energy resolution window perturbs the quantum Fisher information at most $4 \epsilon ||\hat{H}||_\infty + \epsilon^2$. Note that $I_F(\hat\rho, \hat{H}_I) \leq 4 ||\hat{H}_I||_\infty^2 = 4$.
Finally, we obtain the following trade-off relation between clock/work resources with the $\epsilon$-energy resolution
\begin{equation}
W_{\rm coh}^\epsilon + k_B T \left( \frac{I_F^\epsilon(\hat\rho,\hat{H})}{2 \Delta_E^2} \right) \leq k_B T \sum_{i=1}^N \log d^{(i)} + \tilde{R}(\epsilon),
\end{equation}
where $\tilde{R}(\epsilon) = \frac{k_B T}{\Delta_E^2} \left[ 2 \epsilon \left(\sum_m p^\epsilon_m |m| + ||\hat{H}||_\infty \right) + \epsilon^2 (\frac{1}{2}-p_0^\epsilon) \right] = {\cal O}(\epsilon)$.

\section{Nonlocal energy-diagonal states problem and energy level degeneracy in the Ising model}

The problem of nonlocal energy-diagonal states can be also circumvented by allowing a small energy gap when interaction is weak.
As an example, we analyse this in the Ising model.
In the 1D transverse-field Ising model, the Hamiltonian is given by 
$$\hat{H}_{\rm Ising} = - h \sum_{i=1}^N \hat{\sigma}_z^{(i)} - J \sum_{i=1}^N \hat{\sigma}_x^{(i)} \otimes \hat{\sigma}_x^{(i+1)},$$
 where $h$ and $J$ are the strength of the transverse field and the coupling between adjacent spins, respectively. Here $\hat{\sigma}^{(i)}_{x,y,z}$ are the Pauli-$x,y,z$ operators for $i$th spin, and we additionally assume a periodic boundary condition $\hat{\sigma}_{x,y,z}^{(N+1)} = \hat{\sigma}_{x,y,z}^{(1)}$. 
We use $\ket{0}$ and $\ket{1}$,  which are eigenstates of $\hat\sigma_z$ with eigenvalue $-1$ and $+1$, respectively, as a computational basis of local spins.

We first consider the case of $N=2$, with a weak interaction $J \ll h$. 
When there is no interaction, i.e. $J=0$, an energy eigenvalue $E=0$ has degenerate eigenstates $\ket{01}$ and $\ket{10}$.
For $J \neq 0$, the energy level splits into $E = \pm 2J$, where corresponding eigenstates are given by the entangled states $\ket{ \psi_\pm }= \ket{01} \mp \ket{10}$.
Assuming a weak interaction $J \ll h$, we can allow a small energy gap $\epsilon > 4J$, then we can consider $\ket{ \psi_\pm }$ to have effectively the same energy level and transformation between them is possible by a thermal process.
In this case, {\it internal coherence} is now contained in the coherence between $\ket{ \psi_\pm }$. 
A thermal process can transform this type of internal coherences to a fully mixed separable form $\ket{01}\bra{01} + \ket{10}\bra{10}$, by which we can extract extra work as in the noninteracting Hamiltonian.

The same technique of (\ref{QFIperturb}) can be used to calculate perturbation of the QFI by the weak interaction,
$$
\begin{aligned}
I_F(\hat\rho, \hat{H}_{\rm lsing}) &= 2 \sum_{j,k} \frac{(\lambda_j - \lambda_k)^2}{\lambda_j + \lambda_k} |\bra{\psi_j} \hat{H}_{\rm Ising} \ket{\psi_k}|^2\\
&= 2 \sum_{j,k} \frac{(\lambda_j - \lambda_k)^2}{\lambda_j + \lambda_k} \\
&\qquad \times \left[ |\bra{\psi_j} \hat{H}_0 \ket{\psi_k}|^2 +  2hJ \bra{\psi_j}\sum_{i=1}^N \hat{\sigma}_x^{(i)} \ket{\psi_k} \bra{\psi_k} \sum_{i=1}^N \hat{\sigma}_x^{(i)} \otimes \hat{\sigma}_x^{(i+1)} \ket{\psi_j}  + J^2 |\bra{\psi_j} \sum_{i=1}^N \hat{\sigma}_x^{(i)} \otimes \hat{\sigma}_x^{(i+1)} \ket{\psi_k}|^2\right]\\
&= I_F(\hat\rho, \hat{H}_0) + 4hJ\sum_{j,k} \frac{(\lambda_j - \lambda_k)^2}{\lambda_j + \lambda_k} \bra{\psi_j}\sum_{i=1}^N \hat{\sigma}_x^{(i)} \ket{\psi_k} \bra{\psi_k} \sum_{i=1}^N \hat{\sigma}_x^{(i)} \otimes \hat{\sigma}_x^{(i+1)} \ket{\psi_j} + J^2 I_F\left(\hat\rho, \sum_{i=1}^N \hat{\sigma}_x^{(i)} \otimes \hat{\sigma}_x^{(i+1)}\right)
\end{aligned}
$$
where $\hat{H}_0 = - h \sum_{i=1}^N \hat{\sigma}_z^{(i)}$ is the unperturbed Hamiltonian.
Then with the interaction, we have 
$$
I_F(\hat\rho, \hat{H}_0) - 8hJN^2 \leq I_F(\hat\rho, \hat{H}_{\rm lsing}) \leq I_F(\hat\rho, \hat{H}_0) + 8 h J N^2 + 4 J^2 N^2.
$$
by using $||\sum_{i=1}^N \hat{\sigma}_x^{(i)}||_\infty \leq N$ and $||\sum_{i=1}^N \hat{\sigma}_x^{(i)} \otimes \hat{\sigma}_x^{(i+1)}||_\infty \leq N$.
 
 For general $N$ and an arbitrary strength of interaction $J$, the Ising Hamiltonian can be rewritten in terms of non-interacting quasi-particles \cite{Sachdev11}. In this viewpoint, we can redefine internal and external coherences between the non-interacting quasi-particles and apply Theorem~2 accordingly.

In the transverse Ising model, the spin operators can be mapped to the excitations of spinless fermions by the Jordan-Wigner transformation
$\hat{c} = \left(\bigotimes_{j=1}^{l-1} \hat{\sigma}_z^{(j)}\right)  \hat{\sigma}_-^{(l)}$, where $\hat{\sigma}_\pm^{(l)} = (\hat{\sigma}_x^{(l)} \pm i \hat{\sigma}_y^{(l)})/2$. Note that the operator $\hat{c}_k$ and $\hat{c}^\dagger_l$ satisfies the following anti-commutation relation: $\{ \hat{c}_k, \hat{c}^\dagger_l \} = \delta_{k,l}$ and $\{ \hat{c}_k, \hat{c}_l \} = 0$.
Then we note that the Ising Hamiltonian can be divided into two block diagonal Hamiltonians
$\hat{H}_{\rm Ising} = {\hat H}_e \oplus {\hat H}_o$,
depending on the parity of the fermion number $\hat{N} = \sum_j \hat{c}_j^\dagger \hat{c}_j$ \cite{IsingRef}.

Each Hilbert space can be diagonalized by taking Fourier transform of fermion operators,
$$
\hat{c}_{k_n} = \frac{1}{\sqrt{N}} \sum_{j=1}^N \hat{c}_j e^{ i k_n j}~{\rm (even)} \quad {\rm and} \quad \hat{c}_{p_n} = \frac{1}{\sqrt{N}} \sum_{j=1}^N \hat{c}_j e^{ i p_n j}~{\rm (odd)},
$$
where
$$
k_n = \frac{2\pi( n + 1/2)}{N}~{\rm (even)} \quad {\rm and} \quad p_n = \frac{2 \pi n} {N}~{\rm (odd)}
$$
for $n = -\frac{N}{2}, \cdots , \frac{N}{2}-1$.
Then the following Bogoliubov transformation for both $k_n$ and $p_n$
$$
\begin{aligned}
\hat{c}_k &= \cos (\theta_k /2) \hat{b}_k + i \sin(\theta_k /2) \hat{b}_{-k}^\dagger \\
\hat{c}_k^\dagger &= i \sin (\theta_k /2) \hat{b}_k + \cos (\theta_k /2) \hat{b}_{-k}^\dagger,
\end{aligned}
$$
with the Bogoliubov angle $e^{i \theta_k} = (h-J e^{ik})/\sqrt{J^2 + h^2 -2 h J \cos k}$
leads to 
$$
\hat{H}_{e} = \sum_{n} {\cal E}_{k_n} \left[ \hat{b}_{k_n} ^\dagger \hat{b}_{k_n} - \frac{1}{2} \right]~{\rm (even)} \quad {\rm and} \quad \hat{H}_{o} = \sum_{n\neq0} {\cal E}_{p_n} \left[ \hat{b}_{p_n} ^\dagger \hat{b}_{p_n} - \frac{1}{2} \right] - 2(J-h) \left[ \hat{b}^\dagger_{0}\hat{b}_{0} - \frac{1}{2} \right]~{\rm (odd)},
$$
with the dispersion relation \cite{Sachdev11, IsingRef}
$$
{\cal E}_k = 2 \sqrt{ h^2 + J^2 - 2 h J \cos (k)}.
$$
Thus the energy spectrum of the Hamiltonian $\hat{H}_{\rm Ising}$ can be evaluated as
${\cal E}(k_1, \cdots, k_{2m})$ for even numbers of the fermion excitation 
$\ket{k_1, \cdots, k_{2m}} = \prod_{i=1}^{2m} \hat{b}^\dagger_{k_i} \ket{0}_{NS} \in {\cal H}_e$
known as Neveu-Schwarz (NS) sector
and 
${\cal E}(p_1, \cdots, p_{2m})$ for odd numbers of the fermion excitation
$\ket{p_1, \cdots, p_{2m+1}} = \prod_{i=1}^{2m+1} \hat{b}^\dagger_{p_i} \ket{0}_{R} \in {\cal H}_o$
known as Ramond (R) sector.
Figure~\ref{IsingFig} shows the number of degeneracies in the energy eigenspaces with various configurations of $J$ and $h$ for $N=16$.

In this case of large $N$, we note that $k_n \approx p_n$ and the diagonalized form of the Ising Hamiltonian 
can be considered as the $N$-particle non-interacting two-level Hamiltonian model with energy levels $\left\{ -\frac{{\cal E}_{k_n}}{2}, \frac{{\cal E}_{k_n}}{2} \right\}$
in the $n$th site.
Then we can apply Theorem~2 to obtain the clock/work trade-off relation
by taking
$$
\Delta_E^2 = \sum_{n=1}^N (\Delta_E^{(n)})^2 \approx \sum_{n=1}^N {\cal E}_{k_n}^2 = \sum_{n=1}^N 4 ( h^2 + J^2 - 2 h J \cos (k_n)) = 4N ( h^2 + J^2).
$$
and $d^{(n)} = 2$ for every $n$.
Finally, we have the trade-off relation for the 1D-transverse Ising model:
$$
W_{\rm coh} + k_B T \left( \frac{I_F(\hat\rho,\hat{H})}{8N (h^2 + J^2)} \right) \leq N k_B T \log 2.
$$
For general nonintegrable many-body systems, however, a general trade-off relation between clock/work resources is a highly non-trivial question, and deserves further study.
\begin{figure}[t]
\includegraphics[width=0.78\linewidth]{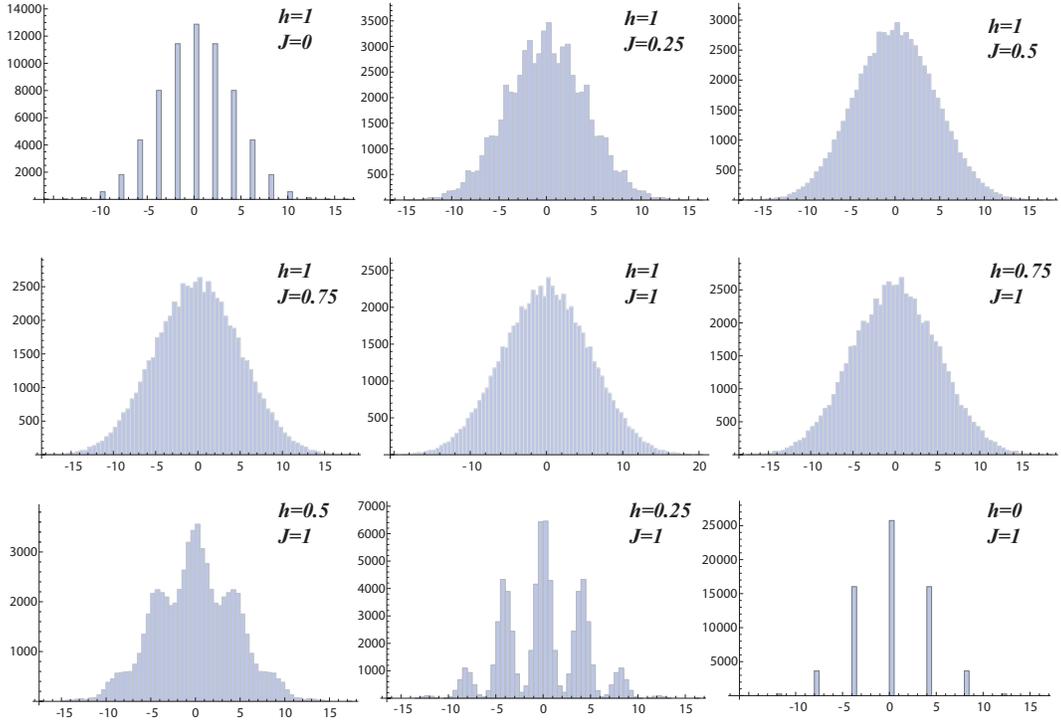}
\caption{Degeneracy in energy eigenspaces for the transverse Ising model. Parameters are chosen from $h=1, J=0$ to $h=0, J=1$ with $N=16$ by allowing the energy gap $\epsilon = 0.5$.}
\label{IsingFig}
\end{figure}


\end{document}